\documentclass{sig-alternate}
\usepackage{algorithmic}
\usepackage[small,it]{caption}
\usepackage{lmodern}
\usepackage{rotating}
\usepackage{subfigure}
\usepackage{balance}
\usepackage{booktabs}
\usepackage{url}
\usepackage{ifpdf}
\usepackage[ruled]{algorithm2e}
\ifpdf 
  \usepackage[pdftex]{graphicx}
  \DeclareGraphicsExtensions{.pdf,.png,.jpg,.jpeg,.mps}
  \usepackage{pgf}
  \usepackage{tikz}
  \usetikzlibrary{mindmap,trees}
\else 
\usepackage{graphicx}
\usepackage{tikz}
\usepackage{pstricks}
\usepackage{eepic}
\usepackage{epic}
\usepackage{color}

\fi

\makeatletter
\let\STYLEFILEproof\proof
\let\STYLEFILEendproof\endproof
\let\proof\@undefined
\let\endproof\@undefined
\makeatother

\usepackage{amsmath,amsthm,amssymb}  

\let\proof\STYLEFILEproof
\let\endproof\STYLEFILEendproof

\newtheorem*{thm*}{Theorem}
\newtheorem{thm}{Theorem}[section]  
  
\newtheorem{lemma}[thm]{Lemma}  
\newtheorem*{lemma*}{Lemma}  
\newtheorem{corollary}[thm]{Corollary}  

\newcommand{\squishlist}{\begin{itemize}}

\newcommand{\squishlisttwo}{
 \begin{list}{$\bullet$}
  { \setlength{\itemsep}{0pt}
     \setlength{\parsep}{0pt}
    \setlength{\topsep}{0pt}
    \setlength{\partopsep}{0pt}
    \setlength{\leftmargin}{2em}
    \setlength{\labelwidth}{1.5em}
    \setlength{\labelsep}{0.5em} } }

\newcommand{\squishend}{\end{itemize}}

\newcommand{\given}{\,\vert\,}
\newcommand{\bP}{{\bf P}}
\newcommand{\bQ}{{\bf Q}}
\newcommand{\bA}{{\bf A}}
\newcommand{\bD}{{\bf D}}

\newcommand{\cL}{\mathcal{L}}
\newcommand{\MSE}{\mbox{MSE}}

\newcommand{\NMSE}{\mbox{\textnormal{NMSE}} }
\newcommand{\CNMSE}{\mbox{\textnormal{CNMSE}} }

\newcommand{\vol}{\mbox{\textnormal{vol}}}
\newcommand{\indeg}{\mbox{\textnormal{indeg}}}
\newcommand{\outdeg}{\mbox{\textnormal{outdeg}}}
\newcommand{\comment}[1]{}
\newcommand{\techreport}[2]{#1}

\newlength{\plotwidth}
\setlength{\plotwidth}{\columnwidth}
\addtolength{\plotwidth}{-0.01in}

\newlength{\plotheight}
\setlength{\plotheight}{0.6\columnwidth}
\addtolength{\plotheight}{-0.1in}

\newlength{\aplotheight}
\setlength{\aplotheight}{1.5\columnwidth}

\newlength{\halfplotwidth}
\setlength{\halfplotwidth}{0.7\columnwidth}
\addtolength{\halfplotwidth}{-0.1in}

\newlength{\halfplotheight}
\setlength{\halfplotheight}{0.4\columnwidth}

\makeatletter
\let\@copyrightspace\relax
\makeatother

\begin{document}
\conferenceinfo{IMC'10,} {November 1--3, 2010, Melbourne, Australia.} 
\CopyrightYear{2010}
\crdata{} 

\title{\vspace{-10pt} Estimating and Sampling Graphs with \\ Multidimensional Random Walks}

\numberofauthors{2} 
%
\author{
\alignauthor
\vspace{-20pt}
 Bruno Ribeiro\\
       \affaddr{Computer Science Department}\\
       \affaddr{University of Massachusetts at Amherst}\\
       \affaddr{Amherst, MA, 01002}\\
       \email{ribeiro@cs.umass.edu}
\alignauthor
\vspace{-20pt}
Don Towsley\\
       \affaddr{Computer Science Department}\\
       \affaddr{University of Massachusetts at Amherst}\\
       \affaddr{Amherst, MA, 01002}\\
       \email{towsley@cs.umass.edu}
}

 \newcounter{copyrightbox}

\maketitle

\begin{abstract}
 Estimating characteristics of large graphs via sampling is a vital part of the study of complex networks. 
 Current sampling methods such as (independent) random vertex and random walks are useful but have drawbacks.
 Random vertex sampling may require too many resources (time, bandwidth, or money).
 Random walks, which normally require fewer resources {\em per sample}, can suffer from large estimation errors in the presence of disconnected or loosely connected graphs.
 In this work we propose a new $m$-dimensional random walk that uses $m$ {\em dependent} random walkers.
 We show that the proposed sampling method, which we call {\em Frontier sampling}, exhibits all of the nice sampling properties of a regular random walk.
 At the same time, our simulations over large real world graphs show that, in the presence of disconnected or loosely connected components, {\em Frontier sampling} exhibits lower estimation errors than regular random walks.
 We also show that {\em Frontier sampling} is more suitable than random vertex sampling to sample the tail of the degree distribution of the graph.
\end{abstract}

\techreport{
}{
\category{G.3}{Probability and Statistics}{Statistical computing}
\terms{Experimentation}
}

\keywords{Frontier Sampling, Random Walks, MCMC, Estimates, Power Laws, Assortativity, Global Clustering Coefficient}

\section{Introduction}\label{sec:intro}

 A number of recent studies~\cite{SurveyMeasuresGraphs,MobileFriends,MRWFacebook,Faloutsos,LeskovecCommunity, Mislove, MySpace,WillingerRDS,RDSprob} (to cite a few) are dedicated to the characterization of complex networks.
 A complex network is a network with non-trivial topological features (features that do not occur in simple networks such as lattices or random networks). Examples of such networks include the Internet, the World Wide Web, social, business, and biological networks~\cite{SurveyMeasuresGraphs,NewmanReview}.
 This work represents a complex network as a directed graph with labeled vertices and edges.
 A label can be, for instance, the degree of a vertex or, in a social network setting, someone's hometown.
 Examples of network characteristics include the degree distribution, the fraction of HIV positive individuals in a population~\cite{RDSINJECTION}, or the average number of copies of a file in a peer-to-peer (P2P) network~\cite{gkantsidis05randomwalksp2p}.
 
 Characterizing the labels of a graph requires querying vertices and/or edges; each query has an associated cost in resources (time, bandwidth, money).
 Characterizing a large graph by querying the whole graph is often too costly. 
 As a result, researchers have turned their attention to the estimation of graph characteristics based on incomplete (sampled) data.
 In this work we present a new tool, {\em Frontier Sampling}, to characterize complex networks. 
 In what follows {\em random vertex (edge) sampling} refers to sampling vertices (edges) independently and uniformly at random (with replacement).

 Distinct sampling strategies have different resource requirements depending on the network being sampled.
 For instance, in a network where each vertex is assigned a unique user-id (e.g., travelers and their passport numbers, Facebook, MySpace, Flickr, and Livejournal) it is a widespread practice to perform random vertex sampling by querying randomly generated user-ids.
 This approach can be resource-intensive if the user-id space is sparsely populated as the hit-to-miss ratio is low (e.g., less than $10\%$ of all MySpace user-ids between the highest and lowest valid user-ids are currently occupied~\cite{MySpace}).
 Another way to sample a network is by querying edges instead of vertices.
 Randomly sampling edges can be harder than randomly sampling vertices if edges are not be associated to unique IDs (or if edge IDs cannot be randomly queried).
 We summarize some drawbacks of random vertex and edge sampling:
 \squishlist
  \item Random edge sampling may be impractical when edges cannot be randomly queried (e.g., online social networks like Facebook~\cite{MRWFacebook}, MySpace~\cite{MySpace}, and Twitter or a P2P network like Bittorrent).
  \item Random vertex sampling may be undesirable when user-ids are sparsely populated (low hit-to-miss ratio) and queries are subject to resource constraints (e.g., queries are rate-limited in Flickr, Livejournal~\cite{Mislove}, and Bittorrent~\cite{Bittorrent}). In a P2P network like Bittorrent, a client can randomly sample peers (vertices) by querying a tracker (server); however, trackers may rate-limit client queries~\cite{Bittorrent}.
  \item Even when random vertex sampling is not severely resource-constrained, some characteristics may be better estimated with random edge sampling (e.g., the tail of the degree distribution of a graph).
\squishend
 An alternative, and often cheaper, way to sample a network is by means of a random walk (RW).
 A RW samples a graph by moving a particle (walker) from a vertex to a neighboring vertex (over an edge).
 By this process edges and vertices are sampled.
 The probability by which the random walker selects the next neighboring vertex determines the probability by which vertices and edges are sampled.
 In this work we are interested in random walks that sample {\em edges} uniformly.
 The edges sampled by RW can then be used to obtain unbiased estimates of a variety of graph characteristics (we present two examples in Section~\ref{sec:RW}).

 In this work we assume that a random walker has the ability to query a vertex to obtain all of its incoming and outgoing edges (Section~\ref{sec:RW} details the reason behind this assumption).
 This is possible for online networks such as Twitter, LiveJournal~\cite{Mislove}, You\-Tube~\cite{Mislove}, Facebook~\cite{MRWFacebook}, MySpace~\cite{MySpace}, P2P networks~\cite{WillingerRDS}, and the arXiv citations network.
 We revisit the theory behind random walks in Section~\ref{sec:RW}.

 Sampling graphs with random walks is not without drawbacks.
 The accuracy of the estimates depends not only on the graph structure but also on the characteristic being estimated.
 The graph structure can create distortions in the estimates by ``trapping'' the random walker inside a subgraph.
 An extreme case happens when the graph consists of two or more disconnected components (subgraphs).
 For instance, wireless mobile social networks exhibit connection graphs with multiple disconnected components~\cite{MobileFriends}.
 But even connected graphs can suffer from the same problem.
 A random walker can get ``temporarily trapped'' and spend most of its sampling budget exploring the local neighborhood near where it got ``trapped''.
 In the above scenarios estimates may be inaccurate if the characteristics of the local neighborhood differ from the overall characteristic of the graph.
 This problem is well documented (see~\cite{YuvalBook}) and our goal is to mitigate it.

 \subsection*{Contributions}
 This work proposes a new $m$-dimensional random walk sampling method ({\em Frontier sampling}) that, starting from a collection of $m$ randomly sampled vertices, preserves all of the important statistical properties of a regular random walk (e.g., vertices are visited with a probability proportional to their degree).
 While the vertices are visited with a probability proportional to their degree, we show that the joint steady state distribution of Frontier Sampling  (the joint distribution of all $m$ vertices) is closer to uniform (the starting distribution) than that of $m$ independent random walkers, for any $m > 0$.
 This property has the potential to dramatically reduce the transient of random walks.
 
 In our simulations using real world graphs we see that Frontier Sampling mitigates the large estimation errors caused by disconnected or loosely connected components that can ``trap'' a random walker and distort the estimated graph characteristic, i.e., Frontier sampling (FS) estimates have smaller Mean Squared Errors (MSEs) than estimates obtained from regular random walkers (single and multiple independent walkers, reviewed in Section~\ref{sec:MRW}) in a variety of scenarios. 

 We make two additional contributions: (1) we compare random walk-based estimates to those obtained from random vertex and random edge sampling. We show analytically that the tail of the degree distribution is better estimated using random edge sampling than random vertex sampling. We observe from simulations over real world networks (in Section~\ref{sec:FSRS}) that FS accuracy is comparable to the accuracy of random edge sampling. These results help explain recent empirical results~\cite{WillingerRDS};
 (2) we present asymptotically unbiased estimators using the edges sampled by a RW for the assortative mixing coefficient (defined in Section~\ref{sec:assortativity}) and the global clustering coefficient (defined in Section~\ref{sec:GCC}).

\subsection*{Outline}
 The outline of this work is as follows. 
 Section~\ref{sec:notation} presents the notation used in this paper.
 Section~\ref{sec:random} contrasts random vertex with random edge sampling.
 Section~\ref{sec:RW} revisits single and multiple independent random walk sampling and estimation.
 Section~\ref{sec:FS} introduces {\em Frontier Sampling} (FS), a sampling process that uses $m$ dependent random walkers in order to mitigate the high estimation errors caused by disconnected or loosely connected components. 
 Section~\ref{sec:FS} also shows that FS can be seen as an $m$-dimensional random walk over the $m$-th Cartesian power of the graph (formally defined in Section~\ref{sec:FS}).
 In Section~\ref{sec:results} we see that FS outperforms both single and multiple independent random walkers in a variety of scenarios. 
 We also compare (independent) random vertex and edge sampling with FS.
 Section~\ref{sec:literature} reviews the relevant literature.
 Finally, Section~\ref{sec:conclusions} presents our conclusions and future work.

\section{Definitions}\label{sec:notation}
 In what follows we present some definitions.
 Let $G_d = (V,E_d)$ be a labeled directed graph representing the (original) network graph, where $V$ is a set of vertices and $E_d$ is a set of ordered pairs of vertices $(u,v)$ representing a connection from $u$ to $v$ (a.k.a.\ edges).
 We assume that each vertex in $G_d$ has at least one incoming or outgoing edge.
 The in-degree of a vertex $u$ in $G_d$ is the number of distinct edges $(v_1,u),\dots,(v_i,u)$ into $u$, and its out-degree is the number of distinct edges $(u,v_1),\dots, (u,v_j)$ out of $u$. 
 Some complex networks can be modeled as undirected graphs.
 In this case, when the original graph is undirected, we model $G_d$ as a symmetric directed graph, i.e., $\forall (u,v) \in E_d, \, (v,u) \in E_d$. 

 Let $\cL_v$ and $\cL_e$ be a finite set of vertex and edge labels, respectively.
 Each edge $(u,v) \in E_d$ is associated with a set of labels $\cL_e(u,v) \subseteq \cL_e$.
 For instance, the label of edge $(u,v)$ can be the in-degree of $v$ in $G_d$.
 Similarly, we can associate a set of labels to each vertex, $\cL_v(v) \subseteq \cL_v,\, \forall v \in V$.
 Some edges and vertices may not have labels. If edge $(u,v)$  is unlabeled then $\cL_e(u,v) = \emptyset$. 
 Similarly, if vertex $v$  is unlabeled then $\cL_v(v) = \emptyset$.
 
 When performing a random walk, we assume that a random walker has the ability to retrieve incoming and outgoing edges from a queried vertex (and vertices are distinguishable). With this assumption we are able to build (on-the-fly) a symmetric directed graph while walking over $G_d$.
 Let $G=(V,E)$ be the symmetric counterpart of $G_d$, i.e., $$E = \bigcup_{\forall (u,v) \in E_d} \{(u,v),(v,u)\}.$$
 Note that $G$ may not be connected.
 As $G$ is symmetric, we denote by $\deg(v)$ to be the in-degree or the out-degree of $v \in V$ as they are equal.
 Let $\vol(S) = \sum_{\forall v \in S} \deg(v), \, \forall S \subseteq V$, denote the volume of the vertices in $S$.

 Let $\hat{\theta}_l$ be the estimated fraction of {\em vertices} with label $l$ obtained by some estimator.
 The two error metrics used in most of our examples are the normalized root mean square error of $\hat{\theta}_l$, which is a normalized measure of the dispersion of the estimates, defined as 
\begin{equation} \label{eq:NMSE}
 \NMSE(l) = \frac{\sqrt{E[( \hat{\theta_l} - \theta_l )^2]}}{\theta_l} \, .
\end{equation}
 and the normalized root mean square error of the Complementary Cumulative Distribution Function (CCDF) $\gamma = \{\gamma_l\}$, where $\gamma_l = \sum_{k=l+1}^\infty \theta_k$, defined as 
\begin{equation} \label{eq:CNMSE}
 \CNMSE(l) = \frac{\sqrt{E[( \hat{\gamma_l} - \gamma_l )^2]}}{\gamma_l} \, .
\end{equation}
 For the sake of simplicity, and unless stated otherwise, in the remainder of this paper we assume that all queries of edges and vertices have unitary cost and that we have a fixed sampling budget $B$.

\section{Vertex v.s. edge sampling}\label{sec:random}
 We consider a straightforward estimation problem to illustrate a tradeoff between random edge and random vertex sampling.
 Consider the problem of estimating the out-degree distribution of $G_d$.
 Let $\theta_i$ be the fraction of vertices with out-degree $i > 0$ and $d$ be the average out-degree.
 Let the label of vertex $u$, $\cL_v(u)$, be the out-degree of $u$.
 We assume that $d$ is known; also assume that from an edge $(u,v)$ we can query $\cL_v(u)$.
 In random edge sampling the probability of sampling a vertex with out-degree $i$ is proportional $i$: $\pi_i = i \, \theta_i / d$. 
 On the other hand, random vertex sampling samples a vertex with out-degree $i$ with probability $\theta_i$.
 A straightforward calculation shows that the \NMSE (equation~(\ref{eq:NMSE})) of $B$ randomly sampled {\em edges} with out-degree $i$ is 
\begin{equation}\label{eq:NMSEE}
    \NMSE(i) = \sqrt{{(1/\pi_i - 1)}/{B}} \, ,\quad i>0  .
\end{equation}
 Similarly, the $\NMSE(i)$ for random vertex sampling is 
\begin{equation}\label{eq:NMSEV}
    \NMSE(i) = \sqrt{{(1/\theta_i - 1)}/{B}} \, .
\end{equation}
 Now note that $\pi_i / \theta_i = i / d$, which means that $\pi_i > \theta_i$ if $i > d$ and $\pi_i < \theta_i$ if $i < d$.
 From equations~(\ref{eq:NMSEE}) and~(\ref{eq:NMSEV}) we see that random edge sampling more accurately estimates degrees larger than the average ($i > d$) while random vertex sampling more accurately estimates degrees smaller than the average ($i < d$).
 This means random edge sampling exhibits smaller \NMSE when estimating the tail of the out-degree distribution.

 Above we have seen that random edge sampling is more accurate than random vertex sampling in estimating the tail of the out-degree degree distribution.
 A similar result happens with the in-degree distribution and the degree distribution of undirected networks.
 The above analysis explains real world experiments~\cite{WillingerRDS}.
 Unfortunately, as discussed in Section~\ref{sec:intro}, random edge sampling is rarely practical.
 In what follows we see that, if $G$ is connected, random walks exhibit similar statistical properties to random edge sampling.

\section{Random walk sampling} \label{sec:RW}
 In this section we review random walk (RW) sampling and estimation over a non-bipartite, connected, directed, symmetric graph $G$.
 Sampling $G$ with a RW is straightforward. 
 The random walker has a sampling budget $B$ and starts at vertex $v_0 \in V$. 
 For the sake of simplicity, unless stated otherwise, we consider that all queries to vertices have unit cost and that we have a fixed sampling budget $B$.

 Let $\{(u_i,v_i)\}_{i=1}^B$ be the a sequence of edges sampled by a RW, where $u_i = v_{i-1}, \, i=2,\dots,B$.
 Note that edges may be sampled multiple times.
 We refer to $(u_i,v_i) $ as the $i$-th sampled edge.
 At the $i$-th step a walker at vertex $v_i$ chooses an outgoing edge $(v_i,u_i)$ uniformly at random from the set of outgoing edges of $v_i$ and adds $(v_i,u_i)$ to the sequence of sampled edges. 
 At step $i+1$ the random walker starts at vertex $u_i$ and the sampling continues until $i = B$.
 
 The RW described here is the most common type of RW found in the literature~\cite{LovaszSurvey}.
 Other types of random walks differ in the way in which outgoing edges are sampled.
 The Metropolis-Hastings RW~\cite{MRW2} is an example of a random walk that samples {\em vertices} (not edges) uniformly at random.
 However, experiments estimating a variety of metrics indicate that Metropolis-Hastings RW is less accurate than the random walk described in this work~\cite{MRWFacebook,WillingerRDS}.
 For more details about other types of RW please refer to~\cite[Chapter~7]{MCMC}.

 An important property of a RW is its ability to reach a unique stationary regime.
 A necessary condition for stationarity is that $G$ must be symmetric, connected, and non-bipartite (the non-bipartite assumption can be relaxed in a lazy random walk~\cite{LovaszSurvey}).
 In a stationary RW, the sequence of sampled edges is a stationary sequence.
 A sequence $X_1,X_2,\dots$ of random variables is said to be stationary if for any positive integers $n$ and $k$, the joint distribution of $(X_n,\dots , X_{n+k})$ is independent of $n$.
 Once the RW reaches steady state, it also shares two important properties with random edge (RE) sampling.
 First, both RW and RE sample edges uniformly at random~\cite{LovaszSurvey}, which means that the probability that a vertex $v$ is sampled is $\deg(v) / \vol(V).$
 Second, both RW and RE obey the strong law of large numbers, as we see next.

\subsection{Strong Law of Large Numbers}\label{sec:SLLN}
 The following variation of the strong law of large numbers is a powerful tool to build (asymptotically) unbiased estimators of graph characteristics.
 We provide a trivial extension of a well known result~\cite[Theorem~17.2.1]{MeynTweedie} to the case where we are interested in a subset of the graph edges.
 Let $E^\star \subseteq E$ be non-empty.
 Let $(u_i,v_i)$ be the $i$-th RW sampled edge such that $(u_i,v_i) \in E^\star$; and let $B^\star(B)$ be the number of such samples, where $B$ is the number of RW steps.
 $B^\star(B)$ is a random variable that represents the number of RW sampled edges that belong to $E^\star$.
 Note that $B^\star(B) \leq B$.

%

%
\begin{thm}[SLLN] \label{thm:SLLN}
 For any function $f$, where \\ $\sum_{(u,v) \in  E^\star} \vert f(u,v) \vert < \infty$,
 \[
     \lim_{B \to \infty} \frac{1}{B^\star(B)} \sum_{i=1}^{B^\star(B)} f(u_i,v_i) \to \frac{1}{\vert E^\star \vert} \sum_{\forall (u,v) \in E^\star} f(u,v)  \, ,
 \]
almost surely, i.e., the event occurs with probability one.
 \end{thm}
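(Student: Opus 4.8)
The plan is to recognize the sequence of sampled edges as a finite-state Markov chain, apply the classical Markov-chain strong law (the cited \cite[Theorem~17.2.1]{MeynTweedie}) twice -- once to $f$ and once to the indicator of $E^\star$ -- and then divide the two limits, the ``subset'' normalization $1/\vert E^\star\vert$ emerging from the ratio.

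First I would set up the chain. Let $Y_i=(v_{i-1},v_i)$, $i\ge 1$, be the $i$-th directed edge traversed by the walk; its state space is $E$. By the definition of the walk, given $Y_i=(u,v)$ the next state is $(v,w)$ with probability $1/\deg(v)$ for each out-neighbour $w$ of $v$, so $(Y_i)_{i\ge 1}$ is a time-homogeneous Markov chain. Since $G$ is connected and symmetric there is a directed path between any two edges, so the chain is irreducible; since $E$ is finite it is positive (Harris) recurrent, which is all the cited ergodic theorem requires (and it is aperiodic because $G$ is non-bipartite, which the surrounding discussion also uses). A one-line balance check identifies its unique stationary distribution as uniform on $E$: for a target edge $(v,w)$, the in-neighbours of $v$ number $\indeg(v)=\deg(v)$, so $\sum_{u\sim v}\frac{1}{\vert E\vert}\cdot\frac{1}{\deg(v)}=\frac{1}{\vert E\vert}$. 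Hence the ergodic theorem applies with $\pi\equiv 1/\vert E\vert$.

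Next, extend $f$ to all of $E$ by $f(u,v)=0$ for $(u,v)\notin E^\star$ (the summability hypothesis on $f$ is then automatic, $E$ being finite, and is retained only to match the form of the cited statement), and apply the Markov-chain SLLN to $f$ and to $\mathbf 1_{E^\star}$. On a single probability-one event this gives
\[
\frac1B\sum_{i=1}^B f(Y_i)\;\longrightarrow\;\frac1{\vert E\vert}\sum_{(u,v)\in E^\star}f(u,v),
\qquad
\frac{B^\star(B)}{B}=\frac1B\sum_{i=1}^B\mathbf 1[Y_i\in E^\star]\;\longrightarrow\;\frac{\vert E^\star\vert}{\vert E\vert}.
\]
Because $f$ vanishes off $E^\star$, the left-hand numerator is precisely the sum of $f$ over the $B^\star(B)$ sampled edges lying in $E^\star$, i.e.\ $\sum_{i=1}^{B^\star(B)} f(u_i,v_i)$ in the notation of the theorem. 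Dividing the first limit by the second -- legitimate since $E^\star\neq\emptyset$ forces the limiting denominator $\vert E^\star\vert/\vert E\vert$ to be strictly positive, and $B^\star(B)\to\infty$ almost surely so the ratio is eventually defined -- the $1/\vert E\vert$ factors cancel and we obtain the claimed limit $\frac{1}{\vert E^\star\vert}\sum_{(u,v)\in E^\star} f(u,v)$ almost surely.

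I expect no serious obstacle here -- this is bookkeeping on top of the cited theorem -- but the points needing care are (i) verifying that the edge chain inherits irreducibility (and aperiodicity) from the assumptions that $G$ is connected, symmetric, and non-bipartite, together with the balance computation showing the stationary law is uniform on $E$; and (ii) the division step, which must be argued on the intersection of the two probability-one events with a non-vanishing limiting denominator. If one wished to allow an infinite graph $G$, the only modification is to assume (or establish) positive Harris recurrence of the edge chain, after which the argument goes through unchanged.
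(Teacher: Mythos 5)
Your proposal is correct and follows essentially the same route as the paper's proof: both hinge on the indicator of $E^\star$ together with the cited Meyn--Tweedie ergodic theorem for the stationary, uniformly-edge-sampling walk, with the ratio $\sum_i f\,\mathbf 1_{E^\star} \big/ \sum_i \mathbf 1_{E^\star}$ collapsing to the subset average over $E^\star$. Your version merely makes explicit some steps the paper leaves implicit (the edge chain's irreducibility, its uniform stationary law, and the care needed when dividing the two almost-sure limits), which is fine but not a different argument.
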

\begin{proof}
  Let
$$h(u,v)  = \begin{cases}
        1 & \text{if } (u,v) \in E^\star \, , \text { and }\\
        0 & \text{otherwise.}
       \end{cases}
 $$
 As the RW is stationary (and edges are sampled uniformly)
\begin{align*}
   \lim_{B \to \infty}  \frac{\sum_{i=1}^{B} f(u_i,v_i) h(u_i,v_i)}{\sum_{i=1}^{B} h(u_i,v_i)}  \to  \frac{\sum_{\forall (u,v) \in E} f(u,v) h(u,v)}{\sum_{\forall (u,v) \in E} h(u,v)} \, , 
\end{align*}
 almost surely~\cite[Theorem~17.2.1]{MeynTweedie}.
 The proof follows from noting that $B^\star(B) = \sum_{i = 1}^B h(u_i,v_i)$ and that $h(u,v) = 0, \forall (u,v) \in E \backslash E^\star$.
\end{proof}
 Theorem~\ref{thm:SLLN} allows us to construct estimators of graph characteristics that converge to their true values as the number of RW samples goes to infinity ($B \to \infty$). 
 If we are trying to estimate vertex labels we set $E^\star = E$ and $B^\star = B$.
 In what follows we apply Theorem~\ref{thm:SLLN} to estimate graph characteristics; we also present four examples of estimators.

\subsection{Estimators}\label{sec:estimators}

 An {\em estimator} is a function that takes a sequence of observations (sampled data) as input and outputs an estimate of a unknown population parameter (graph characteristic).
 In this section we see how we can estimate graph characteristics using the edges sampled by a RW.
 
 We present estimators of the following four graph characteristics: the edge label density (the fraction of edges with a given label in the graph), the assortative mixing coefficient~\cite{NewmanAssortativity}, the vertex label density, and the global clustering coefficient~\cite{GCC}. 
 Designing these estimators is straightforward:
\begin{enumerate}
 \item[(1)] First we find a function $f$ that computes the characteristic of $G$ using $E$; 
 \item[(2)] then we replace $E$ with the sequence of edges sampled by a stationary RW.
\end{enumerate}
 In what follows we illustrate how to build an estimator of the edge label density.

\subsubsection{Edge Label Density} \label{sec:edgelabel}
 We seek to estimate the fraction of {\em edges} with label $l \in \cL_e$ in $G_d$ among all edges $(u,v)$ that have labels, i.e., $\cL_e(u,v) \neq \emptyset$.
 Edge labels can be anything, from social networking labels to the amount of IP traffic over each link in a computer network.
 An edge label can be, for instance, a tuple $(\outdeg(u),\indeg(v))$ where $\outdeg(u)$ is the out-degree of $u$ and $\indeg(v)$ is the in-degree of $v$ in the original graph $G_d$.

 
 For now we assume that we know $E$.
 Let $E^\star $ be the non-empty subset of $E$ for which there are labels.
 Let $p_l$ denote the fraction of edges in $E^\star$ with label $l$; it is clear that
\begin{equation*}
 p_l = \sum_{\forall (u,v) \in E^\star} \frac{{\bf 1}(l \in \cL_e(u,v))}{\vert E^\star \vert} \, ,
\end{equation*} 
where 
$$
{\bf 1}(l \in \cL_e(u,v)) = \begin{cases}
                           1 & \text{if }l \in \cL_e(u,v) \, , \\
                           0 & \text{otherwise}.
                          \end{cases}
$$ 
 Let $B^\star(B)$ be the number of RW sampled edges that belong to $E^\star$ and $(u_i,v_i)$ be the $i$-th of such edges.
 Replacing $E^\star$ with the edges in $E^\star$ sampled by a stationary RW gives the following estimator 
\begin{equation} \label{eq:edgelabel}
 \hat{p}_l \equiv \sum_{i=1}^{{B^\star(B)}}  \frac{{\bf 1}(l \in \cL_e(u_i,v_i))}{{B^\star(B)}} \, .
\end{equation} 
 It follows directly from Theorem~\ref{thm:SLLN} (with $f(u,v) = {\bf 1}(l \in \cL_e(u,v))$) that $\lim_{{B} \to \infty } \hat{p}_l \stackrel{\mbox{a.s.}}{\to}  p_l$.
 Moreover, from the linearity of expectation, $E[\hat{p}_l] = p_l$ for all values of ${B^\star(B)} > 0$. 
 

\subsubsection{Assortative Mixing Coefficient}\label{sec:assortativity}
 The assortative mixing coefficient~\cite{NewmanAssortativity} is a measure of the correlation of labels between two neighboring vertices.
 By appropriately assigning edge labels derived from vertex labels, we can use the density estimator of equation~(\ref{eq:edgelabel}) to derive an estimator of the assortative mixing coefficient.
 In order to simplify our exposition, we restrict our analysis to the assortative mixing of vertex degrees in a directed graph (equation~(25) of~\cite{NewmanAssortativity}).
 It is trivial to extend our analysis to other types of assortative mixing coefficients, e.g.,  equations~(21) and~(23) of~\cite{NewmanAssortativity}.

 Let ($\outdeg(u),\indeg(v)$) denote the label of a directed edge $(u,v)$ in $G$ that also exists in $G_d$; 
 and let $E^\star$ be the set of all such edges ($E^\star = E_d$).
 Let $p_{ij}$ denote the fraction of labeled edges with label $(i,j)$.
 Let $W_\text{out}$ ($W_\text{in}$) denote the maximum {\em observed} out-degree (in-degree) of $G_d$ in the RW.
 The degree assortative mixing coefficient~\cite{NewmanAssortativity} of a directed graph can be estimated using
 \[
  \hat{r} \equiv \frac{1}{\hat{\sigma}_{\text{in}} \, \hat{\sigma}_{\text{out}}} \sum_{i = 0}^{W_\text{out}} \sum_{j=0}^{W_\text{in}} i j ( \hat{p}_{ij} -  \hat{q}^{\text{out}}_i \hat{q}^{\text{in}}_j) \, ,
 \]
where
\begin{equation*} 
 \hat{p}_{ij} \equiv \sum_{k=1}^{{B^\star(B)}}  \frac{{\bf 1}(\outdeg(u_k)=i \, ,\, \indeg(v_k)=j)}{{B^\star(B)}} \, ;
\end{equation*} 
 \begin{align*}
  & \hat{q}^{\text{out}}_i \equiv \sum_{k = 0}^{W_\text{in}} \hat{p}_{ik} \quad \text{;} \quad \hat{q}^{\text{in}}_j \equiv \sum_{k = 0 }^{W_\text{out}} \hat{p}_{kj} \quad ; \\
 & \hat{\sigma}_{\text{in}} = \sqrt{ \sum_{i=0}^{W_\text{out}} j^2 \hat{q}^{\text{in}}_j - \left(\sum_{i=0}^{W_\text{out}} j \hat{q}^{\text{in}}_j \right)^2 } \quad \text{ ; \, and }  \\ 
&	\hat{\sigma}_{\text{out}} = \sqrt{ \sum_{i=0}^{W_\text{in}} i^2 \hat{q}^{\text{in}}_i - \left(\sum_{i=0}^{W_\text{in}} i \hat{q}^{\text{out}}_i \right)^2 }
\end{align*}
 where $\hat{\sigma}_{\text{in}}$ and $\hat{\sigma}_{\text{out}}$ are the standard deviation of the distribution $\hat{q}^{\text{in}}_i$ $ (\hat{q}^{\text{out}}_j)$.
 As the estimate $\hat{p}_{ij}$ (equation~(\ref{eq:edgelabel})) asymptotically converges almost surely to its true value, it is trivial to show that $\hat{q}^{\text{in}}_i$, $\hat{q}^{\text{out}}_j$, $\hat{\sigma}_{\text{in}}$, and $\hat{\sigma}_{\text{out}}$ also asymptotically converge almost surely to their true values.  
 Thus, $\hat{r}$ asymptotically converges, almost surely, to the true assortative mixing coefficient of~\cite{NewmanAssortativity}, as long as $\sigma_{\text{in}} > 0$ and $\sigma_{\text{out}} > 0$.
 This implies that $\hat{r}$ is an asymptotically unbiased estimator of the assortative mixing coefficient of $G_d$.

\techreport{}{
\hfill\eject
}
\subsubsection{Vertex Label Density}\label{sec:labeldensity}
 Let $\cL_v(v)$ be the set of labels associated with vertex $v,\, \forall v \in V$.
%
  The fraction of {\em vertices} with label $l$ in $G$, $\theta_l$, is 
\begin{equation} \label{eq:vdg}
\theta_l = \frac{1}{\vert V \vert} \sum_{\forall (u,v) \in E}\frac{{\bf 1}(l \in \cL_v(v))}{\deg(v)} \, ,
\end{equation}
as $G=(V,E)$ is directed and symmetric.
By replacing $E$ with a sequence of edges sampled by a stationary RW (here we have $E^\star = E$ and $B^\star = B$) and renormalizing, we arrive at the following estimator for $\theta_l$ 
\begin{equation} \label{eq:hattheta}
\hat{\theta}_l \equiv \frac{1}{S \, B} \sum_{i=1}^B \frac{{\bf 1}(l \in \cL_v(v_i))}{\deg(v_i)} \: ,
\end{equation}
where 
$
   S = 1/B \sum_{i = 1}^B 1/\deg(v_i)  \, .
 $
 From Theorem~\ref{thm:SLLN} we have $\lim_{B \to \infty} S \to  \vert V \vert/\vert E \vert $, almost surely.
 Using again Theorem~\ref{thm:SLLN} we have 
\begin{align*} 
  \lim_{B \to \infty}  \frac{1}{B} \sum_{i=1}^B \frac{{\bf 1}(l \in \cL_v(v_i))}{\deg(v_i)} {\to}  
           \frac{1}{\vert E \vert} \sum_{\forall (u,v) \in E} \frac{{\bf 1}(l \in \cL_v(v))}{\deg(v)} \, ,
\end{align*}
 almost surely, which divided by $\vert V \vert/\vert E \vert $ yields equation~(\ref{eq:vdg}).
 As $S$ converges almost surely to $\vert V \vert/\vert E \vert $, we have $\lim_{B \to \infty}  \hat{\theta}_l  \to  \theta_l$, almost surely.
 This also implies that $\hat{\theta}_l$ is an asymptotically unbiased estimator of $\theta_l$.

\subsubsection{Global Clustering Coefficient}\label{sec:GCC}
In the literature the term {\em clustering coefficient} often refers to the local clustering coefficient~\cite{Strogatz}.
In our example we estimate a different metric: the {\em global} clustering coefficient.
In a social network the global clustering coefficient, $C$, is the probability that the friend of John's friend
is also John's friend~\cite{GCC}.
Let $V^\star$ be the set of vertices $v\in V$ with $\deg(v) > 1$.
The global clustering coefficient of an undirected graph is defined as~\cite{GCC}
\begin{equation}\label{eq:C}
   C \equiv \frac{1}{\vert V^\star \vert} \sum_{\forall v \in V} c(v) \, ,
\end{equation}
where
$$
 c(v) = \begin{cases}
            \Delta(v) / \binom{\deg(v)}{2} & \text{if }\deg(v)\geq 2 \\
             0 & \text{otherwise} \, ,
        \end{cases}
$$ 
 where $\Delta(v) = \vert \{(u,w) \in E \, : \, (v,u) \in E \text{ and }(v,w) \in E\}  \vert$ is the number of triangles that contain vertex $v$ and $\binom{\deg(v)}{2}$ is the maximum number of triangles that a vertex $v$ with degree $\deg(v)$ can belong to.

 Note that finding $\Delta(v)$ for a given vertex $v\in V$ requires knowing all vertices within two hops of $v$, which can be a resource intensive task.
 To avoid the cost of computing $\Delta$, we rewrite equation~(\ref{eq:C})
\begin{equation*}
   C = \frac{1}{\vert V^\star \vert} \sum_{\forall (v,u) \in E} \frac{f(v,u)}{\binom{\deg(v)}{2}} \, ,
\end{equation*}
where $f(v,u)$ gives the number of shared neighbors between $u$ and $v$.
 
 Let $(v_i,u_i)$ be the $i$-th sampled edge in a stationary RW
 and let 
 \[
  \hat{C} \equiv  \frac{1}{S \, B} \sum_{i=1}^B \frac{f(v_i,u_i)}{\binom{\deg(v_i)}{2}}\frac{1}{\deg(v_i)} \, ,
 \]
where 
$$
   S = \frac{1}{B} \sum_{i = 1}^B \frac{1}{\deg(v_i)} .
 $$
 \begin{corollary}
$
  \lim_{B \to \infty} \hat{C} {\to} C \, ,
$
 almost surely.
 \end{corollary}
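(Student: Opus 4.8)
The plan is to follow the template of the vertex label density estimator in Section~\ref{sec:labeldensity}: exhibit $\hat{C}$ as a ratio of two empirical averages over the edges sampled by the (stationary) random walk, show that the numerator and the denominator $S$ each converge almost surely by two separate applications of Theorem~\ref{thm:SLLN} with $E^\star = E$ and $B^\star = B$, and then conclude by the elementary fact that the ratio of two almost-surely convergent sequences converges almost surely to the ratio of the limits whenever the limiting denominator is nonzero.

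First I would dispose of the denominator $S = \frac{1}{B}\sum_{i=1}^{B} 1/\deg(v_i)$. Applying Theorem~\ref{thm:SLLN} with $f(v,u) = 1/\deg(v)$ — the hypothesis $\sum_{(v,u)\in E}|f(v,u)| < \infty$ holds trivially because $G$ is finite — gives $S \to \frac{1}{|E|}\sum_{(v,u)\in E} 1/\deg(v)$ almost surely, and since every vertex $v$ occurs as the first coordinate of exactly $\deg(v)$ edges of $E$, the limit equals $|V|/|E|$, exactly as in Section~\ref{sec:labeldensity}. In particular this limit is strictly positive.

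Next I would treat the numerator $\frac{1}{B}\sum_{i=1}^{B} g(v_i,u_i)$, where $g(v,u) = \frac{f(v,u)}{\binom{\deg(v)}{2}\,\deg(v)}$, adopting the convention $g(v,u) = 0$ whenever $\deg(v) < 2$ so that the $f/0$ term never arises. Summability of $g$ over $E$ is again immediate, so Theorem~\ref{thm:SLLN} yields $\frac{1}{B}\sum_{i=1}^{B} g(v_i,u_i) \to \frac{1}{|E|}\sum_{(v,u)\in E} g(v,u)$ almost surely. The combinatorial heart of the argument is then to re-group this edge sum by its first-coordinate vertex and invoke the identity $\sum_{u:(v,u)\in E} f(v,u) = \Delta(v)$ (each ordered pair of neighbors of $v$ joined by an edge is counted exactly once), converting $\sum_{(v,u)\in E} g(v,u)$ into a sum indexed by the vertices of $G$ expressed through $c(v)$; dividing by the limit $|V|/|E|$ of $S$ should then reproduce the definition~(\ref{eq:C}) of $C$, giving $\hat{C}\to C$ almost surely.

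The step I expect to be the main obstacle is not the probabilistic one — the two appeals to Theorem~\ref{thm:SLLN} are routine — but the normalization bookkeeping in this last step. One must check that the factor $1/S$, which converges to $|E|/|V|$, correctly matches the $1/|V^\star|$ normalization appearing in~(\ref{eq:C}), that vertices with $\deg(v)\le 1$ (which contribute nothing to the numerator yet are counted by $S$) are accounted for consistently, and that the extra weight $1/\deg(v)$ inside the numerator interacts correctly with the per-vertex sum $\sum_{u:(v,u)\in E} f(v,u)/\binom{\deg(v)}{2} = c(v)$. If these constants do not line up exactly as written, the remedy is a minor adjustment to $S$ — for instance restricting its defining sum to sampled vertices $v_i$ with $\deg(v_i)\ge 2$, so that $S\to |V^\star|/|E|$ — after which the ratio of the two almost-sure limits is precisely $C$.
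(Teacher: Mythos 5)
Your proposal follows essentially the same route as the paper's own proof: two applications of Theorem~\ref{thm:SLLN} (one for $S$, one for the numerator) followed by taking the ratio of the almost-sure limits. The normalization caveat you raise is real and well spotted --- the paper simply asserts $\lim_{B\to\infty} S \to \vert V^\star\vert/\vert E\vert$, which as written only holds under the restricted-sum fix you propose (summing $1/\deg(v_i)$ only over sampled vertices with $\deg(v_i)\geq 2$, or equivalently assuming $V^\star = V$), so your version with that adjustment is the careful form of the paper's argument rather than a different one.
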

\begin{proof}
  From Theorem~\ref{thm:SLLN} 
$$
\lim_{B \to \infty} S {\to}  {\vert V^\star \vert}/{\vert E \vert },
$$
almost surely. 
Also from Theorem~\ref{thm:SLLN} 
$$
 \lim_{B \to \infty} \frac{1}{B} \sum_{i=1}^B \frac{f(v_i,u_i)}{\binom{\deg(v_i)}{2}}\frac{1}{\deg(v_i)} {\to} \frac{1}{\vert E \vert} \sum_{\forall (v,u) \in E} \frac{f(v,u)}{\binom{\deg(v)}{2}},
$$
 almost surely, which together with the almost sure convergence of $S$ implies that $\lim_{B \to \infty} \hat{C} {\to} C$, almost surely.
\end{proof}
 Note that almost sure convergence implies that $\hat{C}$ is an asymptotically unbiased estimator of $C$.
 
%
%

\subsection{Estimator Accuracy \& Graph Structure}\label{sec:AccStru}
 Sampling a graph using a RW is not without drawbacks.
 A random walker can get (temporarily) ``trapped'' inside a subgraph whose characteristics differ from those of the whole graph.
 Even if the random walker starts in steady state (i.e., is stationary), this scenario may increase the mean squared error of the estimates.
 If the random walker does not start in steady state, this scenario may cause an increase in the estimation bias as well as the mean squared error.
 Ideally, the random walker needs to mitigate the effect of these traps on the estimates.

 The above two types of estimation errors are well documented in the literature and various solutions are available~\cite{GeyerPracticalMCMC}. 
 For instance, if the random walker does not start in a stationary regime (transient), it is common practice to discard the first $w$ samples~\cite{GeyerPracticalMCMC}.
 The value of $w$ is called the {\em burn-in period}.
 There are two problems with this solution: (1) it only reduces the error related to the non-stationarity of the samples; (2) it is difficult to determine a good value for $w$ if the sampling budget is small (compared to the size of the graph) and the size and structure of $G$ are unknown.
 
 A simple naive solution to the RW ``trapping'' problem (adopted in~\cite{MRWFacebook} to sample Facebook), is to sample the graph using multiple independent random walkers~\cite{GeyerPracticalMCMC}. 
 In what follows we see that this naive approach can lead to increased estimation errors.
 In Section~\ref{sec:frontier} we propose a method to mitigate the random walk ``trapping'' problem using $m$ {\em dependent} random walkers.

\subsection{Multiple Independent Random Walkers}\label{sec:MRW}
 The main problem of estimating graph characteristics using a single walker is that the walker may get trapped inside a local neighborhood.
 But there is the question of what happens if we could start $m$ independent random walkers ({\em MultipleRW}) at $m$ independently sampled vertices in the graph.
 Note that when $m=1$ we are back to sampling $G$ using a single random walker, which we denote as {\em SingleRW}. 
 Networks such as MySpace, Facebook, and Bittorrent admit random (uniform) vertex sampling at cost $c$ higher than the cost of sampling the neighbors of a known vertex (which is what a RW does).
 In such networks random vertex sampling may help us start $m$ random walkers at different parts of the graph.
 While the value of $c$ can be large, initializing $m$ random walkers with uniformly sampled vertices costs (only) $m c$ units of our sampling budget (where one unity of the budget is the cost of sampling a vertex in a RW).

	\begin{figure}[tbht]
	\begin{center}
         \begin{minipage}{\plotwidth}
  			   \vspace{-0pt}
				\resizebox{\plotwidth}{0.95\plotheight}{
				\hspace{-20pt}
      			   \input{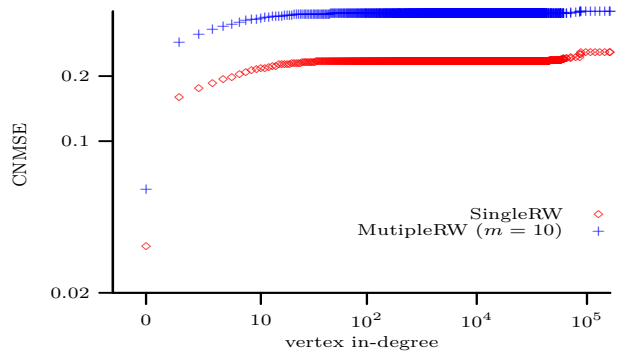}
			   }
			\caption{	(Flickr) The log-log plot of the \CNMSE of the in-degree distribution estimates with budget $B = \vert V \vert / 10$.}
			\label{fig:LCCFlickrMSRWMSE}
		\end{minipage}
	\end{center}
			\vspace{-15pt}
	\end{figure}

 Unfortunately, $m$ independent random walkers starting at $m$ randomly sampled vertices may decrease estimation accuracy.
 Consider the following experiment where each of the $m$ random walkers (independently) performs $\lfloor B/m - c \rfloor$ steps.
 We seek to estimate the CCDF (complementary cumulative distribution function) of the in-degree of the Flickr graph (the Flickr dataset is summarized in Table~\ref{tab:graphs}).
 According to Table~\ref{tab:graphs} the Flickr graph is disconnected.
 The goal of this simulation is to compare the estimation accuracy of SingleRW and MultipleRW when there are no disconnected components.
 For this we set $c = 1$.
 The sampling budget is $B = 171,525 =\vert V \vert / 10$, which amounts to a sampling budget equivalent to $10\%$ of the vertices in the graph. 
 Figure~\ref{fig:LCCFlickrMSRWMSE} shows a log-log plot of the CNMSE, equation~(\ref{eq:CNMSE}), of SingleRW and MultipleRW ($m=10$) averaged over $10,000$ runs.
 Note that the estimates obtained by SingleRW are, on average, more accurate than the estimates obtained by MultipleRW.
 Increasing the sampling budget $B$ does not reduce the gap.
 In Section~\ref{sec:results} we see, over other real-world graphs, that when starting random walkers from uniformly sampled vertices, MultipleRW has higher estimation errors than SingleRW.

\subsection{Disconnected Graph Example}\label{sec:example}
 The following example shows a situation in which both MultipleRW and SingleRW have large estimation errors.
 In this example we initialize MultipleRW with $m$ randomly (uniformly) sampled vertices.
 We simplify our exposition by assuming that each MultipleRW walker takes $B/m$ steps, where $B$ (the sampling budget) is a multiple of $m$. 
 Let $G=(V,E)$ be an undirected graph that has two large disconnected components $G_A = (V_A,E_A)$ and $G_B = (V_B,E_B)$. 
 Let $\vert  V_A \vert = \vert  V_B \vert$ and $\vol(V_A) > \vol(V_B)$.
 When initial vertices are uniformly sampled, the probability that each MultipleRW walker (independently) starts in $G_A$ ($G_B$) is $h_A = \vert V_A \vert / \vert V \vert $ ($h_B = \vert V_B \vert / \vert V \vert $).
 Recall that $G_A$ and $G_B$ are disconnected.
 For each random walker, after $B/m$ ($B \gg 1$) RW steps, an edge $(u_A,v_A) \in E_A$ is sampled with probability $p_A \approx h_A/\vol(V_A)$. Similarly an edge $(u_B,v_B) \in E_B$ is sampled with probability $p_B \approx h_B/\vol(V_B)$.
 Thus $p_A < p_B$, i.e., the edges in $G_B$ are sampled with higher probability than the edges in $G_A$.
 As our estimators assume that all edges are sampled with the same probability, this imbalance between $p_A$ and $p_B$ has the potential to introduce large MSEs (and biases).
 Note that increasing $m$ does not change $p_A$ and $p_B$. 
 Increasing $B$ only mitigates this problem if $G$ is connected and, in a loosely connected graph, only large values of $B$ positively impact the MSE.
 Ideally we want a RW algorithm that does not rely on large sampling budgets $B$ to achieve low estimation errors.

 Now consider the same thought experiment where each random walker starts in $G_A$ and $G_B$ (independently) with probabilities $h_A = \vol(V_A) /\vol(V)$ and $h_B = \vol(V_B) / \vol(V)$, respectively.
 In this new scenario it is easy to see that $p_A = p_B = 1/\vol(V) = 1/\vol(V)$. 
 Thus, we would like to start a RW at vertex $v$ with probability $\deg(v)/\vol(V)$, $\forall v \in V$.
 Section~\ref{sec:results} shows that in practice this approach can successfully mitigate estimation errors caused by disconnected components.
 Unfortunately, it is difficult to sample $m$ mutually {\em independent} vertices with probabilities proportional to their degrees.
 In the case where $G$ is connected, this is equivalent to jointly start $m$ independent random walkers in steady state.
 In networks such as MySpace, Facebook, and Bittorrent it is unclear how one can (efficiently) sample vertices with probabilities proportional to their degrees.

{ \em
 We want an $m$-dimensional random walk that, in steady state, samples {\bf edges uniformly} at random {\bf but}, unlike MultipleRW, can benefit from starting its walkers at {\bf uniformly sampled  vertices}.
}

\section{Frontier Sampling (FS)}\label{sec:FS}
In this section we present a new and promising approach to an $m$-dimensional random walk that benefits from starting its walkers at uniformly sampled  vertices.
{\em Frontier Sampling} (FS) performs $m$ {\em dependent} random walks in the graph.
 We refer to $m$ as the dimension of the FS random walk.
 Let $c$ be the cost of randomly sampling a vertex.
The FS algorithm, given in Algorithm~\ref{alg:FS} is a centrally coordinated sampling algorithm that maintains a list of $m$ vertices representing $m$ random walkers. 
\begin{algorithm}
\begin{algorithmic}[1]
\STATE $n \gets 0$ \COMMENT{$n$ is the number of steps}
\STATE Initialize $L=(v_1,\dots,v_m)$ with $m$ randomly chosen vertices (uniformly)
\REPEAT
\STATE Select $u \in L$ with probability $\deg(u)/\sum_{\forall v \in L} \deg(v)$  \label{FSloop}
\STATE Select an outgoing edge of $u$, $(u,v)$, uniformly at random\label{FSselect}
\STATE Replace $u$ by $v$ in $L$ and add $(u,v)$ to sequence of sampled edges
\STATE $n \leftarrow n + 1$
\UNTIL $n \geq B - m c$ 
\end{algorithmic}
\caption{Frontier Sampling (FS).\label{alg:FS}}
\end{algorithm}
 This way FS is less likely to get stuck in loosely connected components than a single random walker.
 However, in Section~\ref{sec:FSvardist} we see that the joint steady state distribution of FS is much closer to the uniform distribution than is the steady state distribution of $m$ independent random walkers.
 Section~\ref{sec:distributedFS} describes how the FS algorithm can be made fully distributed.
 In Section~\ref{sec:results} we see that, if the initial set of random walk vertices is chosen uniformly at random, FS estimates are more accurate than both single and $m$ independent random walkers.
 
\subsubsection*{Frontier Sampling: An m-dimensional Random Walk} \label{sec:frontier}
 FS shares many of the same statistical properties of a single random walker.
 The key insight behind Theorem~\ref{thm:FST} below is that the FS stochastic process is equivalent to the stochastic process of a single random walker over the $m$-th Cartesian power of $G$, $G^m = (V^m,E_m)$, where 
\[
V^m= \{(v_1,\dots,v_m) \given v_1 \in V \wedge \dots \wedge v_m \in V\}
\]
is the $m$-th Cartesian power of $V$ and $\forall {\bf v},{\bf u}\in V^m$, $({\bf v},{\bf u}) \in E_m\,$ if exists an index $i$ such that $(v_i,u_i) \in E$ and $u_j = v_j$ for $j \neq i$.

\begin{figure}[htb]
\begin{center}
\def\JPicScale{0.8}
\input{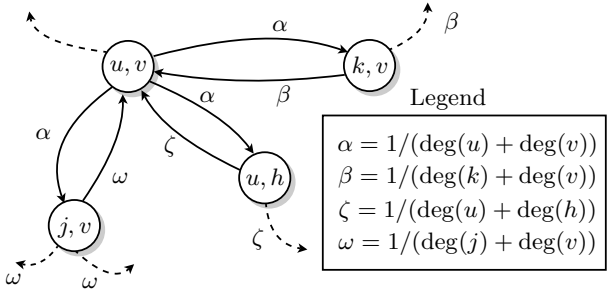}
\caption{Illustration of the Markov chain associated to the Frontier sampler with dimension $m=2$.\label{fig:exFS}}
\vspace{-18pt}
\end{center}
\end{figure}

\begin{lemma}\label{lem:FSGM}
 The Frontier sampling process is equivalent to the sampling process of a single random walker over $G^m$.
\end{lemma}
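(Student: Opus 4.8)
The plan is to exhibit an explicit bijection between one step of the Frontier Sampling process and one step of a single random walk on $G^m$, and then argue that the two processes induce the same transition probabilities, so that the induced Markov chains on the state space $V^m$ are identical. The state of the FS process at step $n$ is the (ordered) list $L=(v_1,\dots,v_m)$, which is exactly a point of $V^m$; I would first fix this identification of state spaces. A single step of FS picks an index $i$ with probability $\deg(v_i)/\sum_{j}\deg(v_j)$, then picks an outgoing edge $(v_i,u)$ of $v_i$ uniformly among the $\deg(v_i)$ outgoing edges, and replaces $v_i$ by $u$. The resulting transition is from ${\bf v}=(v_1,\dots,v_m)$ to ${\bf u}=(v_1,\dots,v_{i-1},u,v_{i+1},\dots,v_m)$, i.e.\ a neighbour of ${\bf v}$ in $G^m$ in the sense defined just above the lemma.

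The key computation is that the probability FS assigns to this transition equals the probability a uniform random walk on $G^m$ assigns to it. For FS,
\[
P_{\text{FS}}({\bf v}\to{\bf u})
=\frac{\deg(v_i)}{\sum_{j=1}^m\deg(v_j)}\cdot\frac{1}{\deg(v_i)}
=\frac{1}{\sum_{j=1}^m\deg(v_j)}.
\]
On the other side, I would compute $\deg_{G^m}({\bf v})$, the degree of ${\bf v}$ in $G^m$: an edge out of ${\bf v}$ corresponds to choosing one coordinate $i$ and one outgoing edge of $v_i$, so $\deg_{G^m}({\bf v})=\sum_{j=1}^m\deg(v_j)$ (here it is convenient that $G$ is symmetric, so in- and out-degrees agree and $G^m$ is likewise symmetric). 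A single uniform random walk on $G^m$ therefore moves from ${\bf v}$ to each of its neighbours with probability $1/\deg_{G^m}({\bf v})=1/\sum_j\deg(v_j)$, which matches $P_{\text{FS}}$ exactly. Since the transition kernels coincide on every state, and the sequence of sampled edges $(u_i,v_i)$ in FS is precisely the "coordinate that changed" in the corresponding $G^m$-walk step, the two stochastic processes are equivalent; this is essentially the content illustrated by Figure~\ref{fig:exFS} for $m=2$.

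I do not expect a serious obstacle here — the argument is a direct unwinding of definitions — but the one point requiring a little care is the multigraph bookkeeping: if a vertex $v_i$ has parallel outgoing edges (or if the same neighbour $u$ is reachable by several edges), then "pick an outgoing edge uniformly" is a choice among $\deg(v_i)$ edge-slots, not among distinct neighbours, and correspondingly $E_m$ must be read as a multiset of edges so that the degree count $\sum_j\deg(v_j)$ is correct. Once one is consistent about treating both $G$ and $G^m$ at the level of edges rather than neighbours, the matching of probabilities is immediate. A secondary, purely cosmetic subtlety is that FS states are ordered tuples, so there is no quotienting by $S_m$ to worry about; the equivalence is with the walk on the literal Cartesian power $V^m$, exactly as defined.
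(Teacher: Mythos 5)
Your proposal is correct and follows essentially the same route as the paper's own proof: both identify the FS state with a vertex of $G^m$, observe that the two-stage selection (index proportional to degree, then a uniform outgoing edge) collapses to choosing an edge of the frontier uniformly with probability $1/\sum_{j}\deg(v_j)$, and match this with the transition probability $1/\deg_{G^m}(\mathbf{v})$ of a single walker on $G^m$. Your explicit computation of $\deg_{G^m}(\mathbf{v})$ and the remark on edge-level (multigraph) bookkeeping are just slightly more detailed renderings of the paper's ``edge frontier'' argument.
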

\begin{proof}
 Consider the $(n-1)$-st step of FS. 
 The reader may find Figure~\ref{fig:exFS} helpful in following the proof.
 Let $L_n=(v_1,\dots,v_m)$ be the state of FS before the $n$-th step.
 Clearly $L_n \in V^m$.
 Let $e(L_n)$ denote the collection of all edges associated to the vertices in $L_n$. 
 We refer to $e(L_n)$ as the edge frontier at the $n$-th step.
 We describe the transition from state $L_n$ to state $L_{n+1}$ as follows
(lines~(\ref{FSloop}) and~(\ref{FSselect}) of the FS algorithm):
Select a vertex $v \in L_n$ with probability proportional to $\deg(v)$ and 
then replace vertex $v$ in $L_n$ with one of its neighbors (selected uniformly at random).
This is equivalent to randomly sampling an edge from $e(L_n)$ with probability 
\[
 p = \frac{1}{\vert e(L_n) \vert} = \frac{1}{\sum_{\forall v \in L_n} \deg(v)}.
\]
Therefore, $L_n$ transits to state $L_{n+1}$ iff $(L_n,L_{n+1}) \in E_m$ and the transition probability from $L_n$ to $L_{n+1}$ is $1/\vert e(L_n) \vert$.
Thus, the Markov chain that describes FS is equivalent to the Markov chain of a single random walker over $G^m$.
\end{proof}

\begin{thm} \label{thm:FST}
 Recall that $G$ is a directed symmetric graph. 
 If $G$ is connected and non-bipartite, then in steady state FS has the following properties: 
\renewcommand{\labelenumi}{(\Roman{enumi})}
\begin{enumerate}
 \item edges are sampled uniformly at random and form a stationary sequence, 
 \item has distribution, $L_\infty=(v_1,\dots,v_m)$, equal to $$\frac{\sum_{i = 1}^m \deg(v_i)}{ m \vert V \vert^{m-1} \vol(V) } \, ,\text{ ,}$$ which is unique, and
 \item the sequence of sampled edges satisfies the Strong Law of Large Numbers (Theorem~\ref{thm:SLLN}).
\end{enumerate}
\end{thm}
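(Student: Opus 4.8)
The plan is to lean entirely on Lemma~\ref{lem:FSGM}, which already identifies the FS process with a single standard random walk on the $m$-th Cartesian power $G^m = (V^m, E_m)$, whose one-step transition from a state $L=(v_1,\dots,v_m)$ picks an edge of the frontier $e(L)$ uniformly; equivalently it moves to each $G^m$-neighbour of $L$ with probability $1/\sum_i \deg(v_i)$, and the degree of $L$ in $G^m$ is exactly $\deg_{G^m}(L)=\sum_{i=1}^m \deg(v_i)$. So the theorem reduces to verifying that the single-walker machinery of Section~\ref{sec:RW} applies verbatim to $G^m$ and then translating the conclusions back to the edge sequence on $G$.

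First I would check the two hypotheses needed for a unique stationary regime on $G^m$. Connectedness of $G^m$ follows because, given $G$ connected, any ${\bf v}\in V^m$ can be driven to any ${\bf u}\in V^m$ by changing one coordinate at a time along paths of $G$. Non-bipartiteness follows because a connected non-bipartite $G$ contains an odd cycle; fixing coordinates $2,\dots,m$ and letting the first coordinate traverse that cycle produces an odd cycle in $G^m$. Hence $G^m$ is a finite, connected, non-bipartite symmetric graph, and the standard theory gives a unique stationary distribution $\pi(L)\propto \deg_{G^m}(L)$. Normalising, $\sum_{L\in V^m}\deg_{G^m}(L)=\sum_{i=1}^m\sum_{L\in V^m}\deg(v_i)=m\,|V|^{m-1}\vol(V)$, which yields exactly the distribution claimed for $L_\infty$ in part~(II), with uniqueness inherited from irreducibility. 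Since $\{L_n\}$ started from $\pi$ is a stationary Markov chain and the edge sampled at step $n$ is a deterministic function of the transition $(L_n,L_{n+1})$, the sampled-edge sequence is stationary, which is the first half of~(I).

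For the "edges sampled uniformly" half of~(I), I would compute, for a fixed directed edge $(a,b)\in E$, the steady-state probability that the $n$-th sampled edge equals $(a,b)$: conditioning on $L_n=L$ this probability is $|\{i:v_i=a\}|/\sum_i\deg(v_i)$, so when we sum against $\pi(L)$ the factors $\sum_i\deg(v_i)$ cancel, leaving $\frac{1}{m|V|^{m-1}\vol(V)}\sum_{L\in V^m}|\{i:v_i=a\}| = \frac{m|V|^{m-1}}{m|V|^{m-1}\vol(V)} = 1/\vol(V) = 1/|E|$, independent of $(a,b)$ and of $n$. Finally, part~(III) follows by applying the ergodic theorem~\cite[Theorem~17.2.1]{MeynTweedie} to the (positive recurrent) chain on $G^m$: for $g(L,L')=f(\mathrm{edge}(L,L'))\mathbf{1}(\mathrm{edge}(L,L')\in E^\star)$ and $g'(L,L')=\mathbf{1}(\mathrm{edge}(L,L')\in E^\star)$, the time averages converge almost surely to $E_\pi[g]=\frac{1}{|E|}\sum_{(u,v)\in E^\star}f(u,v)$ and $E_\pi[g']=|E^\star|/|E|$, and their ratio is the limit asserted in Theorem~\ref{thm:SLLN}.

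I expect the only genuine work to be the two structural facts about $G^m$ (connectedness and, especially, non-bipartiteness), the clean bookkeeping of the identity $\sum_{L\in V^m}|\{i:v_i=a\}|=m|V|^{m-1}$, and the mild care needed so that the sampled edge is truly a function of $(L_n,L_{n+1})$ (which holds provided $G$ has no self-loops, or after trivially augmenting the state with the last sampled edge). Everything else is a direct invocation of the single-walker results already established for arbitrary connected non-bipartite symmetric graphs.
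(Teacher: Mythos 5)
Your proposal is correct and follows essentially the same route as the paper's proof: identify FS with a single random walk on $G^m$ (Lemma~\ref{lem:FSGM}), establish ergodicity of that walk so that its unique stationary distribution is proportional to $\deg_{G^m}(L)=\sum_{i=1}^m\deg(v_i)$ with normalizing constant $m\vert V\vert^{m-1}\vol(V)$, and then translate uniform edge sampling, stationarity, and the SLLN back to $G$ via the single-walker results of Section~\ref{sec:RW}. Your verification of irreducibility and aperiodicity (coordinate-wise paths and lifting an odd cycle of $G$ into $G^m$) and your explicit cancellation computation giving each edge probability $1/\vol(V)$ are, if anything, cleaner than the paper's more informal ergodicity argument and its ``each edge of $G$ is copied $m\vert V\vert^{m-1}$ times'' observation, but the substance of the argument is the same.
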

\begin{proof}
 \techreport{The proof is found in Appendix~\ref{appx:proofFST}.}{The proof is found in our technical report~\cite{TechReport}.}
\end{proof}
In Section~\ref{sec:example} we observed that, when starting multiple RWs, the MSE is reduced when the number of walkers inside each subgraph matches the number obtained when the graph is connected and all walkers are in steady state.
In what follows we see that, in steady state, the average number of MultipleRW walkers in $V_A$ is far from the average number obtained with $m$ uniformly sampled vertices.
In contrast, Section~\ref{sec:FSvardist} shows that as $m \to \infty$, by uniformly sampling the starting vertices, FS starts in steady state with respect to the number of random walkers in any subset of vertices $V_A \subseteq V$.

	\begin{table*}
		\begin{center}
       { 
		\begin{tabular}{@{}lrrrr@{}} \toprule
			Graph & Flickr & LiveJournal & YouTube& Internet RLT \\ \midrule
			Description & Social Net. & Social Net.  & Social Net. & Internet tracert. \\
			Type of graph  & Directed & Directed  & Directed & Directed \\
			\# of Vertices & $1,715,255$ & $5,204,176$  & $1,138,499$  & $192,244$  \\
			Size of LCC & $1,624,992$ & $5,189,809$ & $1,134,890$   & $609,066$  \\
			\# of Edges & $22,613,981$ & $77,402,652$ & $9,890,764$ & $609,066$  \\
			Average Degree & $12.2$ & $14.6$ & $8.7$  & $3.2$  \\
            $w_\text{max}$ & $2232$ & $1029$ & $3305$ & $335$ \\
			$\%$ of Original Graph & $26.9\%$ & $95.4\%$ &  NA  & NA \\ \bottomrule
		\end{tabular}
        }
		\caption{Summary of the graph datasets used in our simulations. ``Size of LCC'' refers to the size of the largest connected component and $w_\text{max}$ is the value of the largest vertex degree divided by the average degree.}\label{tab:graphs}
		\end{center}
\vspace{-20pt}
	\end{table*}

\subsection{MultipleRW Steady State v.s.\ Uniform Distribution} \label{sec:MultipleRWvardist}
 Consider a MultipleRW process with $m$ walkers and let $K_{mw}(m)$ be a random variable that denotes the steady state number of MultipleRW random walkers in $V_A$.
 Let $\alpha_A = E[K_{mw}(m)]/E[K_{un}(m)]$ be the ratio between the steady state number of MultipleRW in $V_A$ and the number of random walkers that start in $V_A$ from uniformly sampled vertices.
 As all random walkers are independent we have 
 \[
     E[K_{mw}(m)] = \frac{m \, \vert V_A \vert d_A }{\vert V \vert d} \, .
 \]
 It is also easy to see that 
 \[
    E[K_{un}(m)] = \frac{m \, \vert V_A \vert}{\vert V \vert}.
 \]
 From the above we have 
\[
  \alpha_A = E[K_{mw}(m)]/E[K_{un}(m)] = {d_A}/{d} .
\]
 Note that the value of $\alpha_A$ may be quite large or close to zero depending on both (1) the choice of $V_A$ and (2) the average degree of $G$.

\subsection{FS Steady State v.s.\ Uniform Distribution} \label{sec:FSvardist}
 Let $G=(V,E)$ be a connected graph and $V_A \subset V$ be a proper subset of $V$; define $V_B = V \backslash V_A$.
 Let $d_A = \vol(V_A) /\vert V_A \vert , $  $d_B = \vol(V_B) /\vert V_B \vert $, and $d = \vol(V)/\vert V \vert$ be the average degrees of the vertices in $V_A$, $V_B$, and $V$, respectively.
 Consider a FS process with $m$ walkers and let $K_{f\!s}(m)$ be a random variable that denotes the number of random walkers in $V_A$ in steady state.
 Let $K_{un}(m)$ be a random variable that denotes the number of sampled vertices, out of $m$ uniformly (randomly) sampled vertices from $V$, that belong to $V_A$.
 $K_{un}(m)$ has distribution
$
P[K_{un}(m) = k] = \binom{m}{k} p^k (1-p)^{m-k}, \, \forall k \geq 0 \, ,
$ where $p = \vert V_A \vert/\vert V \vert$.
 In this section we show that $K_{f\!s}(m)$ and $K_{un}(m)$ converge to the same limiting distribution, i.e., 
\begin{equation}\label{eq:fsuni}
 \lim_{m \to \infty} P[K_{f\!s}(m) = k] = \lim_{m \to \infty} P[K_{un}(m) = k], \: \forall k \geq 0 .
\end{equation}
 Recall that the FS algorithm starts $m$ random walkers at $m$ uniformly sampled vertices (sampled independently).
 Let $V_A \subseteq V$.
 As $m$ increases, eq.~(\ref{eq:fsuni}), the number of FS walkers that are initially selected to be in $V_A$ approaches the steady state distribution (assuming $G$ is connected).

 Let $L \in V^m$ be the state of FS;
 from (Theorem~\ref{thm:FST}) we have $$ P[L = (v_1, \dots, v_m)] = \frac{\sum_{i = 1}^m \deg(v_i)}{ m \vert V \vert^{m-1} \vol(V) }.$$
 In the following lemma we find the probability that $K_{f\!s}(m) = k$, $0 \leq k \leq m$.
\begin{lemma}\label{lem:Kfs}
\[
    P[K_{f\!s}(m) = k] = \frac{1 }{m d} \binom{m}{k} p^k (1-p)^{m-k} (k \, d_A +  (m-k)  d_B) \, ,
 \]
where $p = \vert V_A \vert /\vert V \vert$ and $0 \leq k \leq m$. 
\end{lemma}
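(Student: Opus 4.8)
The plan is to compute $P[K_{f\!s}(m)=k]$ directly by summing the steady-state distribution of $L$ from Theorem~\ref{thm:FST} over all states $(v_1,\dots,v_m)$ having exactly $k$ coordinates in $V_A$. First I would partition $V^m$ according to which subset $I \subseteq \{1,\dots,m\}$ of coordinates lands in $V_A$; there are $\binom{m}{k}$ choices of $I$ with $|I|=k$, and by the symmetry of the formula in Theorem~\ref{thm:FST} each contributes the same amount, so it suffices to fix $I=\{1,\dots,k\}$ and multiply by $\binom{m}{k}$ at the end. For that fixed $I$, I would sum
\[
 \sum_{v_1,\dots,v_k \in V_A}\ \sum_{v_{k+1},\dots,v_m \in V_B} \frac{\sum_{i=1}^m \deg(v_i)}{m\,|V|^{m-1}\vol(V)}.
\]

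The key step is to evaluate the numerator sum by linearity: split $\sum_{i=1}^m \deg(v_i)$ into the $k$ terms with $i \in I$ and the $m-k$ terms with $i \notin I$. For a term with $i \in I$, summing $\deg(v_i)$ over $v_i \in V_A$ gives $\vol(V_A)=|V_A|\,d_A$, while the remaining $k-1$ "$V_A$-coordinates" each contribute a free factor $|V_A|$ and the $m-k$ "$V_B$-coordinates" each contribute a free factor $|V_B|$; so each of the $k$ such terms yields $|V_A|^{k}|V_B|^{m-k} d_A$, for a total of $k\,|V_A|^{k}|V_B|^{m-k} d_A$. Symmetrically the $i \notin I$ terms contribute $(m-k)\,|V_A|^{k}|V_B|^{m-k} d_B$. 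Hence the fixed-$I$ sum equals
\[
 \frac{|V_A|^{k}|V_B|^{m-k}\bigl(k\,d_A+(m-k)\,d_B\bigr)}{m\,|V|^{m-1}\vol(V)}.
\]
Multiplying by $\binom{m}{k}$, writing $p=|V_A|/|V|$ and $1-p=|V_B|/|V|$, and using $|V|^{m-1}\vol(V)=|V|^m d$ turns $|V_A|^k|V_B|^{m-k}/(|V|^{m-1}\vol(V))$ into $p^k(1-p)^{m-k}/(|V|\,d)\cdot |V|$, i.e. $p^k(1-p)^{m-k}/d$; combining with the leading $1/m$ gives exactly the claimed expression $\tfrac{1}{md}\binom{m}{k}p^k(1-p)^{m-k}\bigl(k\,d_A+(m-k)d_B\bigr)$.

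I expect the main obstacle to be purely bookkeeping rather than conceptual: keeping the exponents of $|V_A|$, $|V_B|$, and $|V|$ straight when one coordinate is "pinned" for the degree sum while the others range freely, and correctly reconciling the normalization $|V|^{m-1}\vol(V)$ with the factors $p^k(1-p)^{m-k}$. It is also worth a sanity check at the end — summing the result over $k$ should give $1$, which follows because $\sum_k \binom{m}{k}p^k(1-p)^{m-k}(k\,d_A+(m-k)d_B)=m\,p\,d_A+m(1-p)d_B=m\,d$ (using $p\,d_A+(1-p)d_B=\vol(V)/|V|=d$), confirming both the formula and the normalizing constant.
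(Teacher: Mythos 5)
Your proposal is correct and follows essentially the same route as the paper's proof: both sum the steady-state distribution from Theorem~\ref{thm:FST} over all states with exactly $k$ coordinates in $V_A$, use the $\binom{m}{k}$ symmetry over coordinate placements, and evaluate the degree sum by linearity to obtain the factor $k\,d_A+(m-k)\,d_B$. Your bookkeeping of the $\vert V_A\vert$, $\vert V_B\vert$, $\vert V\vert$ factors and the final normalization check are both sound.
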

\begin{proof}
 $P[K_{f\!s}(m) = k]$ is the sum of the probabilities $P[L = (v_1, \dots, v_m)]$ over all states $L$ in which exactly $k$ vertices belong to $V_A$.
 Consider $v_i$, the $i$-th element of $L$.
 When $v_i \in V_A$, the contribution of $v_i$ towards $P[K_{f\!s}(m) = k]$ is 
$$
  \vert V_A \vert^{k-1} \vert V_B \vert^{m-k} {\deg(v_i)}/({m \vert V \vert^{m-1} \vol(V)}) \, ;
$$
 when $v_i \in V_B$, the contribution of $v_i$ towards $P[K_{f\!s}(m) = k]$ is 
$$
  \vert V_A \vert^{k} \vert V_B \vert^{m-k-1}{\deg(v_i)}/({m \vert V \vert^{m-1} \vol(V)}).
$$
 Summing over all elements in $L$ and over all vertices yields
\begin{equation}
\begin{split}
   P[ & K_{f\!s}(m) = k]  = \binom{m}{k} \frac{\vert V_A \vert^{k} \vert V_B \vert^{m-k}}{ m \vert V \vert^{m-1} \vol(V) } \\ & \times \left( \sum_{i = 1}^k \sum_{\forall u_i \in V_A} \frac{\deg(u_i)}{\vert V_A \vert} \right. 
    \left. +  \sum_{j = 1}^{m-k} \sum_{\forall t_j \in V_B} \frac{\deg(t_j)}{\vert V_B \vert}  \right) \\
   & = \frac{1}{ m \, d}\binom{m}{k}\, p^k\, (1-p)^{m-k}  ( k\, d_A  + (m-k) d_B)
\end{split}
\end{equation}

%
\end{proof}
 The previous lemma gives the probability that a subset of vertices $V_A$ has $K_{f\!s}(m) \in \{0,\dots,m\}$ FS random walkers.
 The following theorem shows that $K_{f\!s}(m)$ and $K_{un}(m)$ converge to the same limiting distribution.
 \begin{thm}~\\
  $\lim_{m \to \infty} P[K_{f\!s}(m) = k]  = \lim_{m \to \infty} P[K_{un}(m) = k],$ $ \forall k \geq 0 .$
 \end{thm}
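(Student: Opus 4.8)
The plan is to read the result straight off Lemma~\ref{lem:Kfs}, which already supplies a closed form for $P[K_{f\!s}(m)=k]$. Write $b_m(k) = \binom{m}{k} p^k (1-p)^{m-k}$ for the $\mathrm{Binomial}(m,p)$ mass function with $p = \vert V_A\vert/\vert V\vert$, so that $P[K_{un}(m)=k]=b_m(k)$. Then Lemma~\ref{lem:Kfs} can be rephrased as
\[
 P[K_{f\!s}(m)=k] \;=\; r_m(k)\,b_m(k), \qquad r_m(k) \;:=\; \frac{k\,d_A + (m-k)\,d_B}{m\,d},
\]
i.e. the two mass functions differ only by the scalar factor $r_m(k)$. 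Hence the whole statement reduces to understanding the two sequences $r_m(k)$ and $b_m(k)$ as $m\to\infty$ with $k$ held fixed.

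First I would record the one bound that matters: since every vertex of $G$ has degree at least one, for all $m\ge 1$ and $0\le k\le m$ we have $0 < r_m(k)\,d = \frac{k d_A + (m-k) d_B}{m} \le \max(d_A,d_B)$, so $r_m(k)$ is bounded uniformly by $\max(d_A,d_B)/d$. Moreover, for each fixed $k$, $r_m(k) = \frac{d_B}{d} + \frac{k(d_A-d_B)}{m\,d} \to \frac{d_B}{d}$ as $m\to\infty$, a finite constant. Next I would invoke the elementary fact that for fixed $p\in(0,1)$ and fixed $k\ge 0$ the Binomial mass vanishes, $\lim_{m\to\infty} b_m(k)=0$ (e.g. $\binom{m}{k}\sim m^k/k!$ while $(1-p)^{m-k}$ decays geometrically). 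Combining the two, $\lim_{m\to\infty} P[K_{f\!s}(m)=k] = \bigl(\lim_{m\to\infty} r_m(k)\bigr)\cdot 0 = 0 = \lim_{m\to\infty} P[K_{un}(m)=k]$ for every fixed $k\ge 0$, which is exactly the claim.

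There is essentially no obstacle here; the only point needing a little care is bounding $r_m(k)$ before passing to the limit in the product $r_m(k)\,b_m(k)$, so that the limit of the product is the product of the limits. I would, however, emphasise in the write-up that the literal statement is the weakest shadow of the real content. Using $d = p\,d_A + (1-p)\,d_B$ one gets the clean identity $r_m(k) - 1 = \frac{(k - mp)(d_A - d_B)}{m\,d}$, so the ratio $P[K_{f\!s}(m)=k]/P[K_{un}(m)=k]$ equals $1$ precisely at $k = mp$ and stays near $1$ for the ``typical'' values $k\approx mp$. Summing $|r_m(k)-1|\,b_m(k)$ and bounding $E\lvert K_{un}(m)-mp\rvert \le \sqrt{\mathrm{Var}(K_{un}(m))} = \sqrt{mp(1-p)}$ yields
\[
 \bigl\|\,\mathrm{law}(K_{f\!s}(m)) - \mathrm{law}(K_{un}(m))\,\bigr\|_{\mathrm{TV}} \;=\; \frac{|d_A-d_B|}{2\,m\,d}\,E\lvert K_{un}(m)-mp\rvert \;=\; O\!\left(m^{-1/2}\right)\;\longrightarrow\;0,
\]
which is the sharp sense in which FS, started from $m$ uniformly sampled vertices, effectively begins from the uniform configuration on $V_A$ versus $V_B$; the pointwise convergence asserted in the theorem then follows a fortiori.
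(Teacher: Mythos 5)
Your proof is correct, but it takes a genuinely different (and more elementary) route than the paper. The paper's proof also begins from Lemma~\ref{lem:Kfs}, writing $P[K_{f\!s}(m)=k]$ as the binomial mass times the factor $r_m(k)=\bigl(k\,d_A+(m-k)\,d_B\bigr)/(m\,d)$, but it then argues by concentration: it notes $r_m(k)=1$ at $k=mp$, shows $r_m(k)\to 1$ along $k^{\pm}(m)=mp\pm z(m)\sqrt{mp(1-p)}$ with $z(m)=o(m^{1/6})$, and invokes an extension of the de Moivre--Laplace theorem to show the binomial tails outside this window vanish --- i.e., it establishes that the correction factor is asymptotically $1$ wherever $K_{un}(m)$ puts its mass. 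You instead observe that for fixed $k$ and $0<p<1$ (guaranteed since $V_A$ is a nonempty proper subset of $V$) both sides of the literal statement are $0$, because $r_m(k)$ is uniformly bounded by $\max(d_A,d_B)/d$ and converges to $d_B/d$ while the binomial mass at fixed $k$ vanishes; this settles the theorem as stated with almost no machinery, and your care in bounding $r_m(k)$ before splitting the limit of the product is the only step that needs saying. Your added total-variation bound, obtained from the identity $r_m(k)-1=(k-mp)(d_A-d_B)/(m\,d)$ together with $E\lvert K_{un}(m)-mp\rvert\le\sqrt{mp(1-p)}$, giving a distance of order $m^{-1/2}$, is the cleaner quantitative form of what the paper's concentration argument is implicitly after: the intended content (that FS initialized from uniform vertex samples starts essentially in steady state with respect to the walker counts in $V_A$) is exactly this distributional closeness, which the fixed-$k$ pointwise limit alone does not capture, so your strengthening buys a sharper and more meaningful statement, while the paper's route buys a direct look at where the two laws agree (the bulk $k\approx mp$) at the cost of a heavier limit theorem.
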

 \begin{proof}
 From Lemma~\ref{lem:Kfs} 
  \begin{align}\label{eq:Kfs2}
 P[K_{f\!s}(m) = k] = \frac{k\, d_A  + (m-k) d_B}{ m \, d} P[K_{un}(m) = k].
 \end{align}
 Note that if $k = mp$ we have
\begin{equation}\label{eq:mpeq1}
 \frac{ mp d_A  + (m-mp) d_B }{ m \, d} = \frac{ m \frac{\vol(V_A) }{\vert V \vert}  +  m\frac{\vol(V_B)}{\vert V \vert} }{ m \, d} = 1 \, .
\end{equation}
 As $m \to \infty$, the probability mass of $P[K_{un}(m) = k]$ gets highly concentrated around the interval $k \in [mp - c \sqrt{m},mp + c \sqrt{m]}$, for large enough values of $c$.
 Let $k^-(m) = m p - z(m) \sqrt{m p (1-p)}$ and $k^+(m) = m p + z(m) \sqrt{m p (1-p)}$, where $z(m) = o(\sqrt[6]{m})$\footnote{$f(m) = o(h(m))$ implies $\lim_{m \to \infty} f(m)/h(m) = 0$.} is a slow increasing function of $m$.
 Note that
  \begin{align*}
 \frac{z(m) \sqrt{m} \sqrt{p (1-p)} (d_A-d_B)}{m \, d} = o(m^{4/6})/m
 \end{align*}
 and, thus, eq.~(\ref{eq:mpeq1}) yields
\begin{align}\label{eq:km}
\lim_{m \to \infty} \frac{k^-(m) \, d_A  + (m-k^-(m)) d_B}{ m \, d} = 1
\end{align}
and
\begin{align}\label{eq:kp}
\lim_{m \to \infty} \frac{ k^+(m) \, d_A  + (m-k^+(m)) d_B}{ m \, d} = 1.
\end{align}
 All that is left to show is that $\lim_{m \to \infty}P[K_{un}(m) < k^-(m)] = 0$ and $\lim_{m \to \infty} P[K_{un}(m) > k^+] = 0$.
 Using an	 extension of the {\em de Moivre-Laplace} limit theorem shown in~\cite[pg.\ 193]{Feller1} yields
\begin{equation}\label{eq:ml}
\begin{split}
&\lim_{m \to \infty} P[K_{un}(m) < k^-(m)] = 0 \:, \mbox{ and}\\
& \lim_{m \to \infty} P[K_{un}(m) > k^+(m)] =  0 .
\end{split}
\end{equation}
Putting together eqs.~(\ref{eq:Kfs2}),~(\ref{eq:km}),~(\ref{eq:kp}), and~(\ref{eq:ml}), with 
$$
 \lim_{m \to \infty}  \frac{k(m) \, d_A  + (m-k(m)) d_B}{ m \, d} < \infty \, , \, k(m) = o(m)\, ,
$$
 yields
$$\lim_{m \to \infty} P[K_{f\!s}(m) = k] = \lim_{m \to \infty} P[K_{un}(m) = k], \: \forall k \geq 0 ,$$
which concludes our proof.
 \end{proof}
 We have seen as $m$ gets larger, FS gets closer to starting in steady state with respect to the number of FS random walkers inside $V_A$, $\forall V_A \subset V$.

 We have seen that if we initialize $m$ random walkers with uniformly sampled vertices, FS starts closer to steady state than MultipleRW.
 In what follows we show that FS is well suited to be used in large scale (parallel, asynchronous) experiments without incurring in any coordination or communication costs between the random walkers.
\subsection{Distributed FS} \label{sec:distributedFS}
 FS is well suited to be used in large scale (parallel, asynchronous) experiments.
 Let $B$ be the budget of FS. 
 In the distributed version of FS the budget is not directly related to the number of sampled vertices obtained by the algorithm.
 This is because distributed FS  is achieved using multiple independent random walkers where the cost of sampling a vertex $v$ is an exponentially distributed random variable with parameter $\deg(v)$.
 In what follows we show, using the Uniformization principle of Markov chains~\cite[Chapter 7.5]{Cassandras} and the Poisson decomposition property, that FS can be made fully distributed.

 \begin{thm}
  A MultipleRW sampling process where the cost of sampling a vertex $v$ is an exponentially distributed random variable with parameter $\deg(v)$ is equivalent to a FS process.
 \end{thm}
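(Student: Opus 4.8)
The plan is to realize the continuous-time MultipleRW process as a single discrete-time Markov chain via uniformization, and to recognize that chain as the Frontier Sampling chain (Algorithm~\ref{alg:FS}). Throughout, ``equivalent'' means that the two processes produce sequences of sampled edges with the same distribution; since both are initialized by $m$ vertices drawn independently and uniformly from $V$, it suffices to show that their embedded jump chains on $V^m$ have the same transition kernel.

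First I would set up the continuous-time process precisely: $m$ independent walkers, where a walker residing at $v$ holds there for an $\mathrm{Exp}(\deg(v))$ amount of time and then traverses an outgoing edge of $v$ chosen uniformly at random, and the global sequence of sampled edges lists these traversals in increasing order of time. Because $G$ is finite, $\Lambda := \max_{v \in V} \deg(v) < \infty$. By the uniformization principle~\cite[Chapter 7.5]{Cassandras}, each walker $i$ has the same law as the following subordinated process: attach to walker $i$ a Poisson clock $N_i$ of rate $\Lambda$; at each tick of $N_i$, if the walker is at $v$, it moves to a uniformly chosen neighbor of $v$ with probability $\deg(v)/\Lambda$ and otherwise stays put (a ``virtual'' tick). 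The clocks $N_1,\dots,N_m$ are independent.

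Next I would apply the superposition and decomposition properties of Poisson processes. The superposition $N = N_1 + \dots + N_m$ is Poisson of rate $m\Lambda$, and, by Poisson decomposition, each tick of $N$ independently belongs to walker $i$ with probability $1/m$. Hence the joint process on $V^m$ evolves as follows: at each tick of $N$, pick a walker $i$ uniformly at random; if walker $i$ sits at $v_i$, move it to a uniform neighbor with probability $\deg(v_i)/\Lambda$, else do nothing. Discarding the virtual ticks and conditioning on an actual move, from joint state $L=(v_1,\dots,v_m)$ the move is performed by walker $i$ with probability
\[
  \frac{(1/m)\,\deg(v_i)/\Lambda}{\sum_{j=1}^m (1/m)\,\deg(v_j)/\Lambda}
    = \frac{\deg(v_i)}{\sum_{j=1}^m \deg(v_j)},
\]
and walker $i$ then traverses a uniformly chosen outgoing edge of $v_i$. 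This is exactly lines~(\ref{FSloop})--(\ref{FSselect}) of Algorithm~\ref{alg:FS}, so the embedded jump chain of the MultipleRW process coincides with the FS chain; together with the matching (uniform, independent) initial distributions, the two processes sample the same edge sequence in distribution.

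The main obstacle is the bookkeeping around the virtual ticks introduced by uniformization: one must verify that conditioning on a tick being a genuine move yields precisely the FS selection probability $\deg(v_i)/\sum_j \deg(v_j)$ (the factor $\Lambda$ cancels), and that the memorylessness of the exponential holding times --- equivalently, the i.i.d.\ interarrival structure of the Poisson clocks --- keeps the embedded chain Markov with no dependence on elapsed time, matching the stepwise structure of Algorithm~\ref{alg:FS}. Finiteness of $G$ (hence $\Lambda < \infty$) is what makes the uniformization step legitimate. An alternative, uniformization-free derivation reaches the same embedded kernel directly from the facts that the minimum of independent exponentials $\mathrm{Exp}(\deg(v_i))$ is attained by index $i$ with probability $\deg(v_i)/\sum_j\deg(v_j)$ and that, by memorylessness, the surviving clocks remain exponential with their original rates after each move.
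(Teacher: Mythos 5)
Your proof is correct and takes essentially the same route as the paper's: both realize the exponential-cost MultipleRW and Frontier sampling as one and the same continuous-time Markov chain on $G^m$, using the Poisson superposition/decomposition property together with uniformization to identify the embedded jump chain with the FS kernel (pick walker $i$ with probability $\deg(v_i)/\sum_j \deg(v_j)$, then traverse a uniform outgoing edge). If anything, yours is the more careful rendering: you uniformize at the explicit rate $\Lambda=\max_v \deg(v)$ (and also give the competing-exponentials/memorylessness shortcut), whereas the paper's ``unitary uniformization parameter'' and its formula $\bP = I - \bD^{-1}\bQ$ are stated loosely (the identity should read $\bP = I + \bD^{-1}\bQ = \bD^{-1}\bA$), though the intended embedded transition matrix $\bD^{-1}\bA$ is the same.
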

\begin{proof}
 Consider the following Distributed FS (DFS) process.
 Let $\chi = \{L(\tau) \in V^m : \tau \in \mathbb{R}^\star \}$ be the Markov chain associated with a random walker over $G^m = (V^m,E_m)$, the $m$-th Cartesian power of $G$, with transition rate matrix 
\[
 \bQ = \bA - \bD \, ,
\]
 where $\bA$ is the adjacency matrix of $G^m$, $\bA_{i,j} \in \{0,1\}, \, \forall i,j$, and $\bD$ is a diagonal matrix with $\bD_{i,i} = \sum_{\forall j} \bA_{i,j}$.
 We observe this FS process over the interval $[0,B]$. 
 
 In the DFS process, the probability that the $k$-th random walker transitions out of vertex $v_k$ at step $\tau + \Delta$ depends only on $\deg(v_k)$ and not on the state of $L(\tau)$.
 Thus, we can decompose the Poisson process describing a departure from the state $L_n^\prime=(v_1,\dots,v_m)$ into $m$ independent stochastic processes, where the $i$-th process is a Poisson process with parameter $\lambda_i = \deg(v_i)\, , i=1,\dots,m$.
 The above is equivalent to the stochastic process of a MultipleRW process with $m$ random walkers and budget $B$, where the cost of sampling a vertex $v$ is an exponentially distributed random variable with rate $\deg(v)$.

 The DFS is equivalent to a FS process via the {\em Uniformization} property of Markov chains~\cite[Chapter 7.5]{Cassandras}.
 The transition probability matrix of the Uniformized Markov chain (with unitary uniformization parameter) at the embedded transition points is 
\[
  \bP = I - \bD^{-1}\bQ = \bD^{-1}\bA \, ,
\]
 which is also the transition probability matrix of a FS process.
\end{proof}


\section{Results}\label{sec:results}

 In this section we compare FS with SingleRW and MultipleRW.
 We also contrast FS with random vertex and edge sampling.
 The experiments consist of executing these sampling methods on a variety of real world graphs.
 The datasets used in the simulations are summarized in Table~\ref{tab:graphs}:
 ``Flickr'', ``Livejournal'', and ``YouTube'' are popular photosharing, blog (weblog), and video sharing websites, respectively.
 Users are represented as vertices of a graph.
 In these websites a user can subscribe to other user updates; an edge $(u,v)$ exists between users $u$ and $v$ if user $u$ subscribes to user $v$. 
 At ``Livejournal'' and ``YouTube'' it is possible to query the incoming and outgoing edges of a given user.
 Further details of these three datasets can be found in~\cite{Mislove}.
 ``Internet RLT'' is a router-level Internet graph collected from traceroute measurements of 23 monitors distributed over the world~\cite{CAIDA}.
 Note that some of these graphs contain disconnected components (subgraphs).

 In the following simulations the starting vertex of each random walker is chosen uniformly at random from the set of all vertices.
 Our results show that FS estimates are consistently more accurate than their SingleRW and MultipleRW counterparts.
 Moreover, when restricted to the largest connected component, FS reaches steady state faster than SingleRW and MultipleRW in the simulations presented in \techreport{Appendix~\ref{appx:convergence}}{our technical report~\cite{TechReport}}.

\subsection{Assortative Mixing Coefficient}\label{sec:assortativityresults}
 In our first experiment we treat the graphs in Table~\ref{tab:graphs} as undirected graphs.
 In-degrees and out-degrees are represented as vertex labels 
 and the assortative mixing coefficient is obtained using the estimator described in Section~\ref{sec:assortativity}.
 
 In our experiment we average the estimates and calculate their mean squared error (MSE) over $100$ runs.
 The sampling budget is $\vert V \vert / 100$ for all graphs.
 Let $\hat{r}$ denote the estimated value of $r$.
 Table~\ref{tab:AMC} shows a summary of the relative bias of $\hat{r}$ ($1-E[\hat{r}]/r$) and $\hat{r}$'s \NMSE with respect to the true value of $r$.
 We observe that FS is consistently more accurate than both MultipleRW and SingleRW.
 If we focus on Flickr, the FS bias is $7$ fold smaller than the bias of both MultipleRW and SingleRW.
 In addition FS's \NMSE is one order of magnitude smaller than the {\NMSE}\!\!s of MultipleRW and SingleRW.
 The Internet graph (``Internet RLT'') is the only graph we studied that shows little difference between FS and MultipleRW.

 We also perform an extreme experiment that focuses on the impact of loosely connected components on the assortative mixing estimates. 
 Consider a graph that consists of two instances of a random undirected Barab\'asi-Albert~\cite{BA} graph,  $G_A$ and $G_B$, with $5\times 10^5$ vertices each and average degrees 2 and 10, respectively,
 joined by a single edge connecting the two smallest degree vertices in $G_A$ and $G_B$ (ties are resolved arbitrarily). 
 Henceforth, we use $G_{AB}$ to denote the above graph.
 It is worth noting that over the $G_{AB}$ graph, SingleRW consistently finds $\hat{r} = 0$ over all $100$ runs.
 This is because SingleRW only estimated the assortative mixing of either subgraph $A$ or subgraph $B$, which are both zero.
 Over $G_{AB}$ MultipleRW performs almost as bad as SingleRW while FS is able to accurately estimate $r$.

\begin{table*}[htb]
\begin{center}
{ 
\begin{tabular}{@{}llrrrrrr@{}} \toprule
Graph&$r$&\multicolumn{2}{c}{FS}&\multicolumn{2}{r}{MultipleRW}&\multicolumn{2}{r}{SingleRW}\\ 
&&Bias & $\vert \NMSE \vert$ &Bias & $\vert \NMSE \vert$ &Bias & $\vert \NMSE \vert$ \\\midrule 
Flickr&$0.007 $&$8\%$& $1.08$ &$752\%$& $7.65$  & $-619\%$ & $27.32$ \\ 
LiveJournal&$0.07$&$-0.5\%$&$0.11$ &$-12\%$& $0.16$ &$1\%$ & $0.17$ \\ 
Internet RLT&$0.17$&$3\%$&$0.33$ &$2\%$& $0.32$ &$17\%$ & $0.44$\\ 
Youtube&$-0.03$&$0.001\%$& $0.02$ &$2\%$& $0.03$ &$-1\%$ & $0.1$ \\
$G_{AB}$&$0.08$&$0.01\%$& $0.12$ &$70\%$& $0.72$ &$100\%$ & $1.00$ \\ 
  \bottomrule
\end{tabular}
}
\caption{Assortative mixing coefficient estimate bias and the module of the estimator \NMSE; $r$ is the true value of the global clustering coefficient and these values are estimated value over 100 runs. The sampling budget is $\vert V \vert / 100$ for all graphs.}\label{tab:AMC}
\end{center}
\vspace{-20pt}
\end{table*}

\subsection{In- and Out-degree Distribution Estimates} \label{sec:degdist}
 We now focus on estimating the in-degree distribution.
 Let $\theta = \{\theta_i\}_{\forall i \in \cL}$ denote the in-degree distribution, where $\theta_i$ is the fraction of vertices with in-degree $i$.
 In our simulations we estimate $\gamma_i = \sum_{k=i+1}^\infty \theta_k$, the CCDF of $\theta$, using equation~(\ref{eq:hattheta}).
 We choose to estimate the CCDF instead of the density because the CCDF is the plot of choice when it comes to displaying degree distributions.
 Each simulation consists of $10,000$ runs (sample paths) used to compute the empirical \CNMSE (equation~(\ref{eq:CNMSE})).
 The \CNMSE is used to compare the accuracy of the estimates obtained from FS (dimension $m \in \{10,1000\}$), SingleRW, and MultipleRW ($m \in \{10,1000\}$ walkers).
 For the sake of conciseness, we restrict our presentation to a handful of representative results.

	\begin{figure}[tbht]
	\begin{center}
         \begin{minipage}{\plotwidth}
  			   \vspace{-7pt}
				\resizebox{\plotwidth}{0.95\plotheight}{
				\hspace{-20pt}
      			   \input{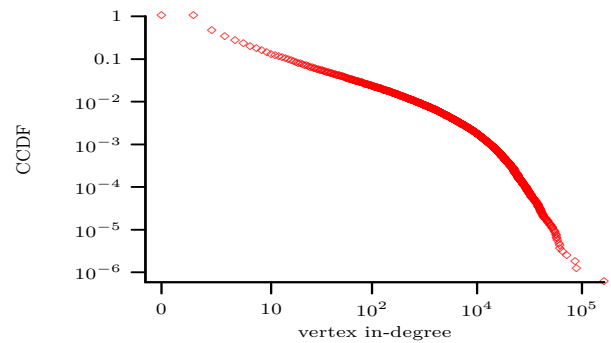}
			   }
			\caption{(Flickr) Log-log plot of the in-degree CCDF.}
			\label{fig:FlikrCCDF}
		\end{minipage}
	\end{center}
			\vspace{-10pt}
	\end{figure}

	\begin{figure}[tbht]
	\begin{center}
         \begin{minipage}{\plotwidth}
  			   \vspace{-0pt}
				\resizebox{\plotwidth}{0.95\plotheight}{
				\hspace{-20pt}
      			   \input{figs/samplingCCDF_indegree_FSunif_RWunif__flickr-links_STEPS18612_K1000_SCC1_INDEPCOST1_RUNS10000.amazonia.cs.umass.edu}
			   }
			\caption{	(LCC of Flickr) The log-log plot of the \CNMSE of the in-degree distribution estimates with budget $B = \vert V \vert / 100$.}
			\label{fig:LCCFlickrMSE}
		\end{minipage}
	\end{center}
			\vspace{-15pt}
	\end{figure}

	\begin{figure}[tbht]
	\begin{center}
         \begin{minipage}{\plotwidth}
  			   \vspace{-0pt}
				\resizebox{\plotwidth}{0.95\plotheight}{
				\hspace{-15pt}
      			   \input{figs/samplingCCDF_indegree_FSunif_RWunif__flickr-links_STEPS18612_K1000_SCC0_INDEPCOST1_RUNS10000.amazonia.cs.umass.edu}
			   }
			\caption{	(Flickr) The log-log plot of the \CNMSE of the in-degree distribution estimates with budget $B = \vert V \vert / 100$.}
			\label{fig:FlickrMSE}
		\end{minipage}
	\end{center}
			\vspace{-15pt}
	\end{figure}

	\begin{figure}[tbht]
	\begin{center}
         \begin{minipage}{\plotwidth}
  			   \vspace{0pt}
				\resizebox{\plotwidth}{0.95\plotheight}{
				\hspace{-15pt}
      			   \input{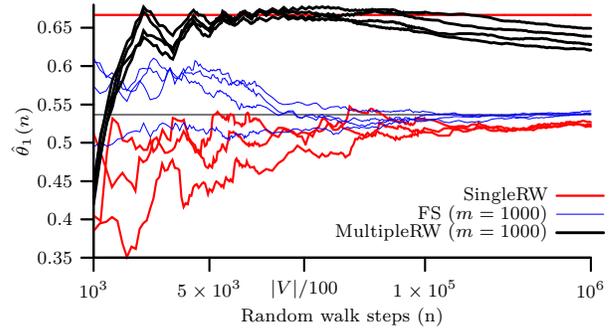}
			   }
			\caption{	(LCC of Flickr) Four sample paths of $\hat{\theta}_{1}$ ($\theta_{1} = 0.53$) as a function of the number of steps $n$ (horizontal axis in log scale).}
			\label{fig:4SPLCCFlickr}
		\end{minipage}
	\end{center}
			\vspace{-23pt}
	\end{figure}

 Consider first two representative results from the Flickr graph, whose in-degree CCDF (complementary cumulative distribution function) log-log plot is shown in Figure~\ref{fig:FlikrCCDF}.
 The sampling budget is $B = 17,152 =\vert V \vert / 100$, which amounts to sampling $1\%$ of the vertices. 
 In the first simulation, we are restricted to the {\em Largest Connected Component} (LCC) (which contains $94\%$ of the vertices).
 The objective is to test if FS can outperform SingleRW and MultipleRW even when there are no disconnected components.
 Figure~\ref{fig:LCCFlickrMSE} shows a log-log plot of the \CNMSE of FS ($m=1000$), SingleRW, and MultipleRW ($m=1000$).
 In this experiment FS outperforms both SingleRW and MultipleRW.
 It is interesting to note that the estimates obtained by SingleRW are more accurate than the estimates obtained by MultipleRW.
 Now consider the complete Flickr graph.
 Figure~\ref{fig:FlickrMSE} shows a log-log plot of the \CNMSE of the in-degree distribution estimates.
 Contrasting the plots shown in Figures~\ref{fig:LCCFlickrMSE} and~\ref{fig:FlickrMSE} we see that the gap between FS and both SingleRW and MultipleRW has significantly increased, favoring FS.

 To better understand the differences between these sampling methods, Figure~\ref{fig:4SPLCCFlickr} focuses on four runs (sample paths) of the simulation over the complete Flickr graph.  
 Figure~\ref{fig:4SPLCCFlickr} plots the evolution of $\hat{\theta}_1$ (the estimate of $\theta_1$) as a function of $n$ (the number of steps in the random walk).
 At each run of the simulator both FS and MultipleRW start at the same initial set of vertices (chosen using random vertex sampling).
 Figure~\ref{fig:4SPLCCFlickr} shows that all four FS sample paths (runs) quickly converge to the value of $\theta_1$.
 For SingleRW, three of the four runs start inside the LCC. 
 These runs do not converge to the value of $\theta_1$ as some vertices with in-degree one lie outside the LCC.
 In one of the runs, SingleRW starts in a small disconnected component and, thus, grossly overestimates the value of $\theta_1$.
 For a similar reason, i.e., walkers starting at small disconnected components, MultipleRW grossly overestimates the value of $\theta_1$. 
 The MultipleRW rapid increase of $\hat{\theta}_1$ at around $n=10^3$ steps needs further investigation.
 It may be due to the transient of the random walk (discussed in Section~\ref{sec:MRW}).
 Even when $n \gg 1$ (not shown in Figure~\ref{fig:4SPLCCFlickr}) the MultipleRW estimate is unable to converge to $\theta_1$.
 Modifying both SingleRW and MultipleRW methods to cope with disconnected components is an interesting open problem.

	\begin{figure}[tbht]
	\begin{center}
         \begin{minipage}{\plotwidth}
  			   \vspace{10pt}
				\resizebox{\plotwidth}{0.95\plotheight}{
				\hspace{-20pt}
      			   \input{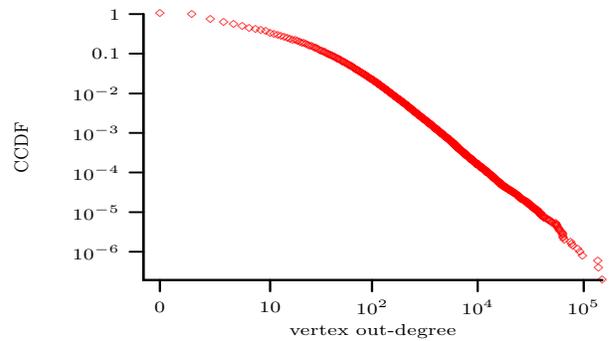}
			   }
			\caption{	(Livejournal) Log-log plot of the out-degree CCDF.}
			\label{fig:LiveJournalOutDeg}
		\end{minipage}
	\end{center}
			\vspace{-10pt}
	\end{figure}

 For the sake of conciseness, we omit the results of the simulations over the remaining graphs (Table~\ref{tab:graphs}) as they are similar to the results observed over the Flickr graph.
 However, consider the {\em out-degree} distribution estimates of Livejournal.
 Figure~\ref{fig:LiveJournalOutDeg} shows a log-log plot of the CCDF of the out-degrees.
 The log-log plot of the \CNMSE is shown in Figure~\ref{fig:LiveJournalMSE} for FS ($m=100$), SingleRW, and MultipleRW ($m =100$) with sampling budget $B = \vert V \vert / 10$. 
 From Figure~\ref{fig:LiveJournalMSE} we see that estimates of vertices with small out-degrees in FS are up to one order of magnitude more accurate than those obtained from both SingleRW and MultipleRW.

	\begin{figure}[tbht]
	\begin{center}
         \begin{minipage}{\plotwidth}
  			   \vspace{0pt}
				\resizebox{\plotwidth}{0.95\plotheight}{
				\hspace{-20pt}
      			   \input{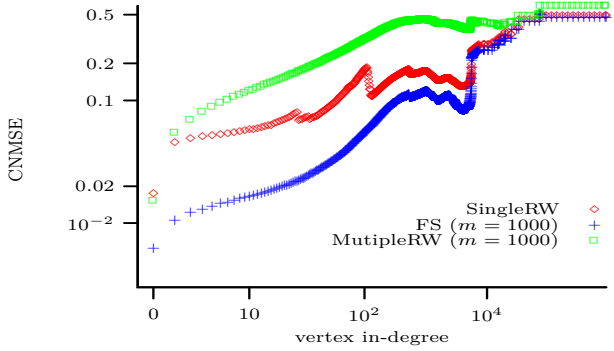}
			   }
			\caption{	(Livejournal) The log-log plot of the \CNMSE of the out-degree distribution estimation with sampling budget $B = \vert V \vert / 100$ (\CNMSE over $10,000$ runs). }
			\label{fig:LiveJournalMSE}
		\end{minipage}
	\end{center}
			\vspace{-10pt}
	\end{figure}

	\begin{figure}[tbht]
	\begin{center}
         \begin{minipage}{\plotwidth}
  			   \vspace{0pt}
				\resizebox{\plotwidth}{0.95\plotheight}{
				\hspace{-15pt}
      			   \input{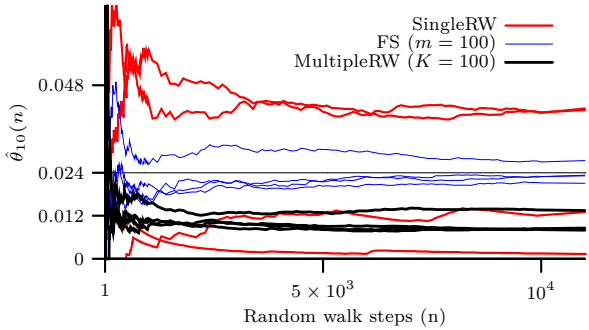}
			   }
			\caption{($G_{AB}$ graph) Four paths of $\hat{\theta}_{10}$ as a function of the number of steps $n$ ($\theta_{10} = 0.024$).}
			\label{fig:4SP}
		\end{minipage}
	\end{center}
			\vspace{-20pt}
	\end{figure}

 The next experiment focuses on studying the impact of loosely connected components on the degree distribution estimates. 
 For this we use the two Barab\'asi-Albert joined graphs $G_{AB}$ presented in Section~\ref{sec:assortativity}.
 The experiment consists of estimating the degree distribution of $G_{AB}$ using FS ($m=100$), SingleRW, and MultipleRW ($m = 100$). 
 Again, both FS and MultipleRW start at the initial set of vertices in each simulation (chosen uniformly at random).
 In this experiment the hypothesis is that, for small sampling budgets, each random walker will see the degree distribution of either $G_A$ or $G_B$ but not the degree distribution of $G_{AB}$.
 Moreover, as the starting vertex of each random walker is chosen uniformly at random, $G_A$, which has the same number of vertices as $G_B$ but $1/5$ of the edges, receives more random walkers than its {\em per edge} ``share''. 
 Consequently, MultipleRW oversamples $G_A$.
 
Figure~\ref{fig:4SP} shows the results of four simulation runs and plots the evolution of the estimates of $\theta_{10}$ ($\hat{\theta}_{10}$) as a function of the number of steps. 
 In this simulation note that: (1) FS quickly converges to a value that is close to the correct value; (2) two out of the four SingleRW runs overestimate $\theta_{10}$ and the remaining two underestimate it; (3) three out of the four MultipleRW runs converge to the same, incorrect, fraction (underestimating the true value of $\theta_{10}$).
 FS is designed to be robust to disconnected or loosely connected components.
 All of the FS runs quickly converge to a good estimates of $\theta_{10}$.
 Figure~\ref{fig:BANMSE} also shows that the \CNMSE for FS is consistently lower than the \CNMSE for SingleRW and MultipleRW.

	\begin{figure}[tbht]
	\begin{center}
         \begin{minipage}{\plotwidth}
  			   \vspace{30pt}
				\resizebox{\plotwidth}{0.95\plotheight}{
				\hspace{-20pt}
      			   \input{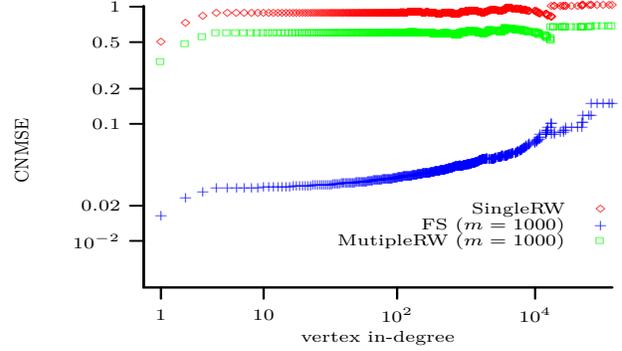}
			   }
			\caption{	($G_{AB}$ graph) The log-log plot of the \CNMSE of the degree distribution estimation with sampling budget $B = \vert V \vert / 100$ (\CNMSE over $10,000$ runs). }
			\label{fig:BANMSE}
		\end{minipage}
	\end{center}
			\vspace{-20pt}
	\end{figure}

\subsection{FS v.s. Stationary MultipleRW \& SingleRW}
 We now compare FS with SingleRW and MultipleRW, when the latter two start in steady state.
 Figure~\ref{fig:SSCCDFNMSE} shows the results (over the Flickr graph) of the same simulation scenario used to obtain the results in Figure~\ref{fig:FlickrMSE}, except that now MultipleRW and SingleRW both start in steady state.
 While SingleRW has improved slightly (most notably at the tail errors), the benefit of starting in steady state is most felt by the MultipleRW method.
 In this simulation we see that the large estimation errors of MultipleRW in the previous simulations were due to the starting vertices being sampled uniformly at random.
 It is interesting to observe that MultipleRW starting in steady state and FS have similar estimation errors.

	\begin{figure}[tbht]
	\begin{center}
         \begin{minipage}{\plotwidth}
  			   \vspace{30pt}
				\resizebox{\plotwidth}{0.95\plotheight}{
				\hspace{-20pt}
      			   \input{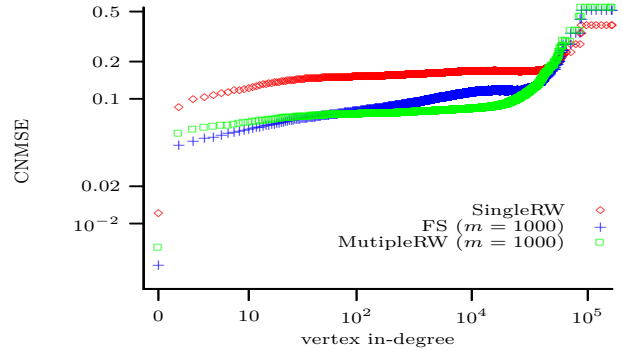}
			   }
			\caption{	(Flickr) The log-log plot of the \CNMSE of the in-degree distribution estimation of MultipleRW and SingleRW starting in steady state; sampling budget $B = \vert V \vert / 100$ (\NMSE over $10,000$ runs). }
			\label{fig:SSCCDFNMSE}
		\end{minipage}
	\end{center}
			\vspace{-20pt}
	\end{figure}  

\subsection{FS v.s.\ Random Independent Sampling}\label{sec:FSRS}
 In Section~\ref{sec:random} we showed that, if the degrees of two neighboring vertices are independent, random edge sampling is more accurate than random vertex sampling when it comes to estimating the tail of the degree distribution.
 In this section we observe this to be true over large real world graphs; we also observe that the accuracy of FS closely matches the accuracy of random edge sampling.
 In the following simulations we estimate the in-degree distribution. 
 Random edge sampling uses the estimator $\hat{\theta}_i$, equation~(\ref{eq:hattheta}) (the estimator used for sampled vertices is trivial). 
 
 In our first simulation we set the sampling cost of random vertex sampling to one and random edge sampling has cost two (as each edge samples two vertices).
 The sampling budget is $B = \vert V \vert / 100$.
 We label this simulation ``$100\%$ hit ratio'' to indicate the unitary cost of randomly sampling vertices.
 Figure~\ref{fig:Flickr100lIID} shows a log-log plot of the \NMSE (not the \CNMSE) of our simulation over the (complete) Flickr graph.
 Here we use the \NMSE (instead of the \CNMSE) in order to be able to compare our results with the ones presented in equations~(\ref{eq:NMSEE}) and~(\ref{eq:NMSEV}).
 The vertical line indicates the average in-degree.
 Note that random edge sampling is more (less) accurate than random vertex sampling at estimating in-degrees larger (smaller) than the average in-degree, as predicted by equations~(\ref{eq:NMSEE}) and~(\ref{eq:NMSEV}) of our model in Section~\ref{sec:random}.
 We also observe that the accuracy of FS ($m=1000$) closely matches the accuracy of random edge sampling.

	\begin{figure}[tbht]
	\begin{center}
         \begin{minipage}{\plotwidth}
  			   \vspace{20pt}
				\resizebox{\plotwidth}{0.95\plotheight}{
				\hspace{-15pt}
      			   \input{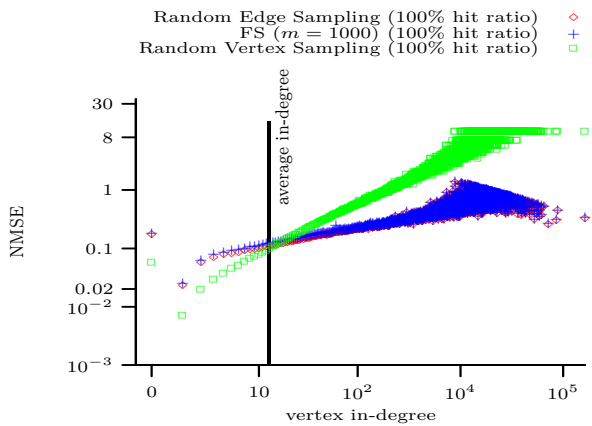}
			   }
			\caption{	(Flickr) The log-log plot shows the \NMSE of the in-degree distribution estimation with budget $B = \vert V \vert / 100 = 18612$ (\CNMSE over $10,000$ runs).\vspace{-10pt}}
			\label{fig:Flickr100lIID}
		\end{minipage}
	\end{center}
			\vspace{-5pt}
	\end{figure}

	\begin{figure}[tbht]
	\begin{center}
         \begin{minipage}{\plotwidth}
  			   \vspace{20pt}
				\resizebox{\plotwidth}{0.95\plotheight}{
				\hspace{-20pt}
      			   \input{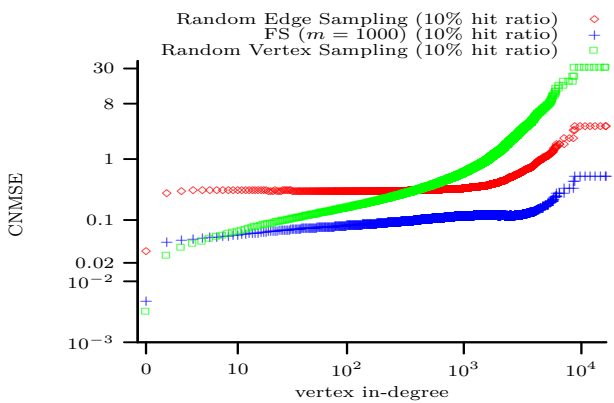}
			   }
			\caption{	(Livejournal) The log-log plot shows the \CNMSE of the in-degree distribution estimation with budget $B = \vert V \vert / 100 = 52844$ (\CNMSE over $10,000$ runs).\vspace{-10pt}}
			\label{fig:LiveJournalIID}
		\end{minipage}
	\end{center}
			\vspace{-10pt}
	\end{figure}

 Some complex networks exhibit a sparse user-id space. 
 In this scenario a fraction of the sampling budget $B$ can be spent querying invalid users-ids. 
 Motivated by recent experiments over the MySpace network~\cite{MySpace}, the following experiment assumes that only $10\%$ of the user-ids are valid, i.e., on average only one in every ten randomly sampled vertices are valid. 
 We denote this value ($10\%$) to be the {\em hit ratio}.  
 For random edge sampling we assume a {\em hit ratio} of $1\%$ (the choice of $1\%$ is arbitrary).
 Figure~\ref{fig:LiveJournalIID} shows a log-log plot of the \CNMSE of our simulation over the (complete) Livejournal graph with sampling budget $B=\vert V \vert / 100 = 52844$.
 We observe that FS ($m=1000$), which samples $m=1000$ random vertices and (on average) crawls $B-10\,m$ vertices, outperforms random edge sampling.
 Also note that FS estimates are more accurate than the estimates obtained from random vertex sampling for all but the three smallest in-degrees. 
 This indicates that FS is more robust to low hit ratios than random vertex and edges sampling.

	\begin{figure}[tbht]
	\begin{center}
         \begin{minipage}{\plotwidth}
  			   \vspace{5pt}
				\resizebox{\plotwidth}{0.95\plotheight}{
				\hspace{-20pt}
      			   \input{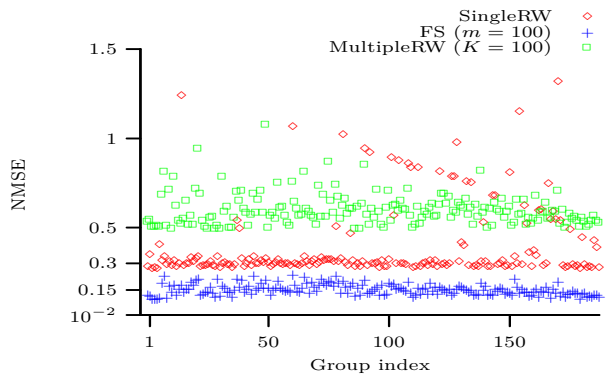}
			   }
			\caption{	(Flickr) The \NMSE of the density estimates of the most popular groups in the Flickr graph.\vspace{-10pt}}
			\label{fig:FlickrGroups}
		\end{minipage}
	\end{center}
			\vspace{-10pt}
	\end{figure}

\subsection{Density of Special Interest Groups} \label{sec:groupsdist}
 In a variety of complex networks, e.g.\ on-line social networks, each vertex (user) is associated with multiple labels that represent group affiliations, e.g.\ user interests, user geolocation, among others.
 For example, in the Flickr graph $21\%$ of the users belong to one or more special interest groups~\cite{Mislove}.
 Let $\cL$ denote the set of groups in the Flickr graph and $\theta_l$ denote the fraction of vertices that belong to group $l \in \cL$.
 In the simulations we estimate $\theta_l$ using FS ($m=100$), SingleRW, and MultipleRW $(m=100)$ with budget $B= \vert V \vert / 100$.
 Figure~\ref{fig:FlickrGroups} shows the \NMSE (from $10,000$ runs) of the most popular $200$ groups ordered in decreasing popularity.
 FS is clearly superior to both SingleRW and MultipleRW.
 Even when restricting the random walks to the largest connected component, FS still noticeably outperforms MultipleRW ($m=100$) and SingleRW.

\begin{table}[htb]
\begin{center}
\textsf{ \small
\begin{tabular}{@{}llllll@{}} \toprule
 &&&\multicolumn{3}{c}{$E[\hat{C}]\:(\NMSE)$} \\ \cmidrule(r){4-6} 
  Graph & $B$  & $C$  & FS & SingleRW & MultipleRW\\ \midrule 
Flickr & $1\%$ & $0.14$ & $0.13 \: (0.04)$ & $0.12 \: (0.33)$ & $0.16 \: (0.18)$ \\ 
LiveJournal & $1\%$ & $0.16$  & $0.16 \: (0.02)$  &$ 0.16 \: (0.02)$ & $0.17 \: (0.06)$ \\ 
  \bottomrule
\end{tabular}
}
\caption{Global clustering coefficient estimates. $C$ is the true value of the global clustering coefficient and $\hat{C}$ is its estimated value.}\label{tab:GCC}
\end{center}
\vspace{-20pt}
\end{table}

\subsection{Global Clustering Coefficient Estimates}
 In our last set of experiments we evaluate the accuracy of estimating the global clustering coefficient using FS, SingleRW, and MultipleRW. 
 Our simulations show a small difference between FS ($m = 1000$), SingleRW, MultipleRW ($m =1000$). 
 Let $C$ be the clustering coefficient and $\hat{C}$ denote its estimated value.
 Table~\ref{tab:GCC} presents the empirical value of $E[\hat{C}]$ and the empirical \NMSE of the clustering coefficient, given by
 \[
 \NMSE = \frac{\sqrt{E[( \hat{C} - C )^2]}}{C} \, ,
 \]  
 over $10,000$ runs of FS, SingleRW, and MultipleRW over the Flickr and Livejournal graphs. From the results of Table~\ref{tab:GCC} we see that FS accurately estimates the global clustering coefficient and has smaller error than both SingleRW and MultipleRW.

\section{Related work}\label{sec:literature}

 This section is devoted to review the related literature. 
 FS can be classified as a Markov Chain Monte Carlo (MCMC) method.
 Other MCMC-based methods have been applied to characterize complex networks.
 Applications include, but are not limited to estimating characteristics of a population~\cite{RDSprob} (e.g.\ estimation of HIV seroprevalence among drug users~\cite{RDSINJECTION}),  content density in peer-to-peer networks~\cite{gkantsidis05randomwalksp2p, LaurentP2P, WillingerRDS, WillingerMRW}, uniformly sampling Web pages from the Internet~\cite{UniformURLRW,WWWUniformSamplingRW}, and 
uniformly sampling Web pages from a search engine's index~\cite{YossefSamplingWeb}.
 The above literature is mostly concerned with random walks that seek to sample vertices uniformly (also known as Metropolized Random Walks or Metropolis-RW)~\cite{gkantsidis05randomwalksp2p,UniformURLRW,WWWUniformSamplingRW,YossefSamplingWeb,WillingerMRW}.
 The accuracy of RW and Metropolis-RW (MRW) is compared in~\cite{MRWFacebook,WillingerRDS}, and in a variety of experiments RW estimates are shown to be consistently more accurate than or equal to MRW estimates.

 The above literature does not consider the use of multiple random walks to address the problem of estimating characteristics of disconnected or loosely connected graphs.
 While multiple independent random walkers have been used as a convergence test in the literature, our simulations in Section~\ref{sec:results} show that independent walkers are not suited to sample loosely connected graphs when the starting vertices are selected uniformly at random.
 
 A number of real complex networks are known to have disconnected or loosely connected components.
 A large body of MCMC literature is dedicated to overcome the locality problem described in Section~\ref{sec:AccStru}.
 However, the literature either assumes that the graph is very structured, e.g., a $2$ dimensional lattice, or that the graph is completely known.  These assumptions make the solutions inapplicable to our problem.
 A comprehensive list of MCMC methods and their characteristics can be found in~\cite{MCMC}. 

 Projecting a RW onto a higher dimensional space has been used in~\cite{Lifting} to make the Markov chain associated to the random walker nonreversible, which can speed up the mixing of the original RW. 
 Unfortunately, it is unclear if this method can be successfully used to estimate characteristics of complex networks. 

 In networks that cannot be crawled (e.g., the Internet topology), samples must be obtained along shortest paths, and vertex degrees cannot be queried,~\cite{TracerouteBiasTheory} shows that observed vertex degrees are biased.
 Our work, however, assumes a graph can be crawled and vertex degrees queried.
 Our scenario admits a RW with an unbiased estimator.
 Multiple random walks also find other applications besides the one presented in this work.
 They are used to collect Web data~\cite{ParallelCrawlers}, search P2P networks~\cite{MultipleRWSearch,MRW2}, and decrease the time to discover ``new wireless nodes''~\cite{WirelessMultipleRW}.
 Dependent multiple random walks are also used in percolation theory~\cite{2Dpercolation}.

\section{Discussion and Future work}\label{sec:conclusions}
 In this work we presented a new and promising random walk-based method ({\em Frontier sampling}) that mitigates the estimation errors caused by subgraphs that ``trap'' a random walker. 
 Frontier sampling (FS) uses multiple ($m$) mutually {\em dependent} random walkers starting from vertices sampled uniformly at random.
 The FS samples are shown to be the projection (onto the original graph) of a special type of $m$-dimensional (single) random walker. 
 Simulations over real world graphs in Section~\ref{sec:results} show that Frontier sampling (FS) is more robust than single and multiple independent random walkers (starting out of steady state) to estimate in-degree distributions and the fraction of users that belong to a social group.
 We also present evidence, using an analytical argument (also substantiated by simulations), that random walks (in particular, FS) are better suited to estimate the tail (all degrees greater than the average) of degree distributions than random vertex sampling.
 FS can also be made fully distributed without incurring in any coordination or communication costs.
 
 The ideas behind FS can have far reaching implications, from estimating characteristics of dynamic networks to the design of new MCMC-based approximation algorithms.

\section{Acknowledgments}
{

We would like to thank Weibo Gong for many helpful discussions and Alan Mislove for kindly making available some of the data used in this paper. This research was sponsored by the ARO under MURI W911NF-08-1-0233, and the U.S. Army Research Laboratory under Cooperative Agreement Number W911NF-09-2-0053. The views and conclusions contained in this document are those of the author(s) and should not be interpreted as representing the official policies, either expressed or implied, of the U.S. Army Research Laboratory or the U.S. Government. The U.S. Government is authorized to reproduce and distribute reprints for Government purposes notwithstanding any copyright notation hereon.
 }
\bibliographystyle{plain}
\vspace{-7pt}
\balance
\bibliography{FunctionEstimationOverGraphs}
\techreport{
\appendix

\balance
\section{Proof of the Frontier Sampling Theorem}\label{appx:proofFST}
 In what follows we restate Theorem~\ref{thm:FST} and present a proof.
 \begin{thm*}
 Recall that $G$ is a directed symmetric graph. 
 If $G$ is connected and non-bipartite, then in steady state FS has the following properties: 
 \renewcommand{\labelenumi}{(\Roman{enumi})}
 \begin{enumerate}
 \item edges are sampled uniformly at random and form a stationary sequence, 
 \item has distribution, $L_\infty=(v_1,\dots,v_m)$, equal to $$\frac{\sum_{i = 1}^m \deg(v_i)}{ m \vert V \vert^{m-1} \vol(V) } \, ,\text{ ,}$$ which is unique, and
 \item the sequence of sampled edges satisfies the Strong Law of Large Numbers (Theorem~\ref{thm:SLLN}).
 \end{enumerate}
 \end{thm*}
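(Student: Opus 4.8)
The plan is to invoke Lemma~\ref{lem:FSGM}, which identifies the FS process with a single random walk on the Cartesian power $G^m=(V^m,E_m)$ in which, from a state ${\bf v}=(v_1,\dots,v_m)$, each of the $\sum_{i=1}^m\deg(v_i)$ incident edges is chosen with probability $1/\sum_{i=1}^m\deg(v_i)$. This is exactly the degree-proportional random walk of Section~\ref{sec:RW}, now run on $G^m$ instead of $G$, so all three claims will follow from the standard theory of such walks once I check that $G^m$ inherits connectedness and non-bipartiteness from $G$.

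First I would verify the structural hypotheses on $G^m$. Connectedness: given ${\bf u},{\bf v}\in V^m$, one walks from ${\bf u}$ to ${\bf v}$ in $G^m$ by rewriting the coordinates one at a time, using in coordinate $i$ a path from $u_i$ to $v_i$ that exists because $G$ is connected. Non-bipartiteness: $G$ contains an odd cycle, and keeping all but one coordinate fixed embeds that odd cycle into $G^m$. Also every vertex of $G^m$ has positive degree because every vertex of $G$ does. Hence the walk on $G^m$ is a finite, irreducible, aperiodic Markov chain and has a unique stationary distribution $\pi({\bf v})=\big(\sum_{i=1}^m\deg(v_i)\big)/\vol(V^m)$. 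The one computation needed is $\vol(V^m)=\sum_{{\bf v}\in V^m}\sum_{i=1}^m\deg(v_i)$, which I evaluate coordinate by coordinate: one coordinate ranges over $V$ contributing $\vol(V)$, the other $m-1$ range freely contributing $\vert V\vert^{m-1}$, and there are $m$ choices of the moving coordinate, so $\vol(V^m)=m\vert V\vert^{m-1}\vol(V)$. This gives property~(II) together with its uniqueness.

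For property~(I) I would push $\pi$ down to edges. In any stationary degree-proportional walk the probability of traversing a given directed edge $({\bf u},{\bf v})$ of $G^m$ equals $\pi({\bf u})/\big(\sum_i\deg(u_i)\big)=1/\vol(V^m)$, so edges of $G^m$ are sampled uniformly. Each directed edge of $G^m$ projects to the directed edge of $G$ lying in the coordinate that changed, and a fixed directed edge $(a,b)\in E$ is the image of exactly $m\vert V\vert^{m-1}$ directed edges of $G^m$ (choose the moving coordinate, then the common values of the other $m-1$ coordinates). Hence a given edge of $G$ is sampled with probability $m\vert V\vert^{m-1}/\vol(V^m)=1/\vol(V)$, uniform over $E$. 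Stationarity of the sampled-edge sequence is then immediate: started from $\pi$, the chain $\{L_n\}$ is a stationary process, hence so is $\{(L_n,L_{n+1})\}$, and the edge sampled at step $n$ is a fixed deterministic function of $(L_n,L_{n+1})$. Finally, property~(III) follows by repeating the proof of Theorem~\ref{thm:SLLN} with $G$ replaced by $G^m$: the walk on $G^m$ is finite and irreducible, so~\cite[Theorem~17.2.1]{MeynTweedie} applies; pulling a function $f$ on $E^\star\subseteq E$ back to the edges of $G^m$ via the projection and using the uniform edge-sampling law from~(I) yields exactly the almost-sure limit of Theorem~\ref{thm:SLLN}. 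The main obstacle is not any single deep step but getting the two combinatorial identities right---$\vol(V^m)=m\vert V\vert^{m-1}\vol(V)$ and the $m\vert V\vert^{m-1}$-to-one edge projection---and being careful that it is the coordinate-wise embedding of an odd cycle of $G$ (rather than a product of cycles) that keeps $G^m$ non-bipartite.
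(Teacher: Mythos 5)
Your proposal is correct and follows essentially the same route as the paper: identifying FS with a single random walk on the Cartesian power $G^m$ (Lemma~\ref{lem:FSGM}), computing $\vol(V^m)=m\vert V\vert^{m-1}\vol(V)$ to get property~(II), using the $m\vert V\vert^{m-1}$-to-one edge projection for uniform edge sampling, and inheriting stationarity and the SLLN from the walk on $G^m$. Your verification of irreducibility and aperiodicity via connectedness and non-bipartiteness of $G^m$ is a cleaner rendering of the paper's ergodicity step, but it is the same overall argument.
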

 
\begin{proof}
Consider the $(n-1)$-st step of Frontier sampling. 
The reader may find Figure~\ref{fig:exFS} helpful in following the proof.
Let $L_n=(v_1,\dots,v_m)$ be the state of Frontier sampling before the $n$-th step.
Clearly $L_n \in V^m$.
In what follows let $e(L_n)$ denote the collection of all edges associated to the vertices in $L_n$. 
We refer to $e(L_n)$ as the edge frontier at the $n$-th step.
We describe the transition from state $L_n$ to state $L_{n+1}$ as follows
(lines~\ref{FSloop} and~\ref{FSselect} of the frontier sampling algorithm):
Select a vertex $v \in L_n$ with probability proportional to $\deg(v)$ and 
then replace element $v$ in $L_n$ with one of its neighbors (selected uniformly at random).
This is equivalent to randomly sampling an edge from $e(L_n)$ with probability 
\[
 p = \frac{1}{\vert e(L_n) \vert} = \frac{1}{\sum_{\forall v \in L_n} \deg(v)}.
\]
Thus, $L_n$ is able to transition to state $L_{n+1}$ iff $(L_n,L_{n+1}) \in E_m$ and the transition probability from $L_n$ to $L_{n+1}$ is $1/\vert e(L_n) \vert$.
Thus, we conclude that Frontier sampling is a single random walker over the $m$-th Cartesian power of $G$, $G^m = (V^m,E_m)$, where
\[
V^m= \{(v_1,\dots,v_m) \given v_1 \in V \wedge \dots \wedge v_m \in V\}
\]
is the $m$-ary Cartesian product of $V$ and $\forall {\bf v},{\bf u}\in V^m$, $({\bf v},{\bf u}) \in E_m\,$ if exists an index $i$ such that $(v_i,u_i) \in E$ and $u_j = v_j$ for $j \neq i$.
 Note that $\vert E_m \vert = m \vert V \vert^{m-1} \vert E \vert$.

 Now we need to prove that the distribution of $L_\infty$ is stable and unique.
 For this we only need to show that the random walk over $G^m$ is ergodic. A random walk (Markov chain) is ergodic when it is aperiodic and recurrent non-null.
 Recall that the random walk over $G$ is ergodic.
 The probability that Frontier sampling transitions from $L_n \in V^m$ to $L_{n+1} \in V^m$ such that $L_n$ and $L_{n+1}$ only differ in their $i$-th element is always greater than zero, otherwise there is an infinite increasing degree sequence in the vertices of $G$.
 But this is not possible as the random walk over $G$ is recurrent non-null (an infinite increasing degree sequence would be a sink in the random walk over $G$).
 Thus, any finite sequence of transitions $\{L_{n+w}\}_{w=1}^{\Delta}$, $\Delta > 1$, that only updates its $i$-th element has probability greater than zero.
 Thus, as the sequence $\{L_{n+w}\}_{w=1}^{\Delta}$ is also a single random walk over $G$, it is aperiodic for any chosen $i=1,\dots,m$ and then a random walker over $G^m$ must also be aperiodic.
 We can use the same argument to show that the random walk over $G^m$ is recurrent non-null.
 As random walk over $G^m$ is ergodic, we have that $L^\star$ is distributed according to the steady state distribution of a random walk over $G^m$.
 Let $L_\infty \equiv \lim_{n \to \infty} L_n $ denote the state (vertex in $G^m$) when FS is in steady state.
 The number of edges out of vertex $L_\infty \in V^m$ is $\sum_{i=1}^m \deg(v_i)$.
 The number of edges in $G^m$ is $m \vert V \vert^{m-1} \vol(V)$ as each vertex $v \in V$ appears $m \vert V \vert^{m-1}$ times in $G^m$. Thus,
 \[
 P[ L_\infty = (v_1,\dots,v_m) ] = \frac{\sum_{i=1}^m \deg(v_i)}{ m \vert V \vert^{m-1} \vol(V)} \, ,
 \]
 which is unique and stable (similar to a single random walker as seen in Section~\ref{sec:RW}).
 
 The rest of the proof is straightforward.
 Each edge in $G^m$ is actually an edge in $G$.
 As each edge in $G$ is copied $m \vert V \vert^{m-1}$ times into $G^m$, we have that edges in $G$ are also sampled uniformly at random in a random walk over $G^m$.
 As Frontier sampling is a random walk over $G^m$, its samples form a stationary sequence and follow the Strong Law of Large Numbers seen in Theorem~\ref{thm:SLLN}.
 The same is true for the sequence of sampled vertices.
\end{proof}

\begin{table}
\begin{center}
\begin{tabular}{@{}llrrr@{}} \toprule
& & \multicolumn{3}{c}{Sampling prob. error}  \\ \cmidrule(r){3-5}
Graph & $B$ ({\small sampling budget})  & FS  & MRW & SRW ~ \\ \midrule
Internet RLT  & 100 & $17\%$ & $257\%$ & $156\%$ \\  
YouTube & 20 & $43\%$ & $236\%$ & $216\%$  \\
Hep-Th & 20 & $36\%$ & $1510\%$ & $781\%$  \\ 
\bottomrule
\end{tabular}
\caption{Relative worst-case difference between the steady state and the transient edge sampling probabilities after $B-K$ steps. Frontier edge sampling probabilities are closer to steady state in all graphs.  Legend: (FS) = Frontier sampling ($K=10$), (SRW) = Single ($K=1$) Random Walker, and (MRW) = Multiple ($K=10$) Random Walkers.\label{tab:mixvals}}
 \vspace{-19pt}
\end{center}
\end{table}

\section{Convergence to Uniform Edge Sampling}\label{appx:convergence}
 A question we seek to answer with our simulations is how fast these random walk methods converge to their stationary edge sampling probabilities.
 In this simulation we set $K \in \{1, 10\}$ (number of independent random walkers), $m=10$ (Frontier sampling dimension) and restrict our analysis to the largest connected component of the three graphs in our datasets with the smallest number of vertices (in order to speed the computation): ``Internet RLT'', ``YouTube'', and ``Hep-th''. 
 Let $p_{u,v}^{(B)}$ denote the probability that a random walker, whose initial vertex is chosen uniformly at random, samples edge $(u,v)$ at its the end of its sampling budget $B$. 
 To measure the convergence to the stationary edge sampling probability, we use the largest relative difference between the stationary sampling probability $1/\vert E \vert$ and $p_{u,v}^{(B)}$:
\[
  \max_{(u,v) \in E} 1- \frac{p_{u,v}^{(B)}}{1/\vert E \vert} \, .
\]
 Table~\ref{tab:mixvals} presents a Monte Carlo estimate of this relative difference.
 The $95\%$ confidence interval of the Monte Carlo simulation is $\pm 1\%$. 
 Our estimates show that the difference between the transient and the stationary edge sampling probabilities of independent random walkers are between 5 and 42 times larger than the difference of Frontier sampling.
 This means that Frontier sampling converges faster to stationarity edge sampling probability.

}{}

\end{document}